\definecolor{firebrick}{HTML}{871a1a}
\definecolor{metro_teal}{HTML}{23373b}
\definecolor{light_teal}{HTML}{7E9AA1}
    \newcommand{\articleonly}[2]{
        \iftoggle{preprint}{#2}{#1}%
    }
    \newcommand{\dropcap}[1]{#1}
\newcommand{\mA}[0]{\mathbf{A}}
\newcommand{\mD}[0]{\mathbf{D}}
\newcommand{\mI}[0]{\mathbf{I}}
\newcommand{\mL}[0]{\mathbf{L}}
\newcommand{\mE}[0]{\mathbf{E}}
\newcommand{\mG}[0]{\mathbf{G}}
\newcommand{\mJ}[0]{\mathbf{J}}
\newcommand{\mGamma}[0]{\bm{\Gamma}}
\newcommand{\mSigma}[0]{\bm{\Sigma}}
\newcommand{\mDelta}[0]{\bm{\Delta}}
\newcommand{\mLambda}[0]{\bm{\Lambda}}
\newcommand{\mS}[0]{\mathbf{S}}
\newcommand{\mM}[0]{\mathbf{M}}
\newcommand{\vs}[0]{\mathbf{s}}
\newcommand{\vv}[0]{\mathbf{v}}
\newcommand{\vgamma}[0]{\bm{\gamma}}
\newcommand{\ve}[0]{\mathbf{e}}
\newcommand{\diag}[1]{\textnormal{diag} #1}
\newcommand{\vf}[0]{\mathbf{f}}
\newcommand{\vk}[0]{\mathbf{k}}
\newcommand{\vzero}[0]{\mathbf{0}}
\newcommand{\vbeta}[0]{\bm{\beta}}
\newcommand{\vd}[0]{\mathbf{d}}
\newcommand{\cL}[0]{\mathcal{L}}
\renewcommand{\t}[0]{{(t)}}
\newcommand{\tplusone}[0]{{(t+1)}}
\newcommand{\dds}[1]{\frac{\partial #1}{\partial \vs}}
\newtheorem{thm}{Theorem}
\newtheorem{lm}{Lemma}
\newtheorem{cor}{Corrolary}
\newtheorem{lemma*}{Lemma}
\newtheorem{prop*}{Proposition}
\theoremstyle{remark}
\newcommand\E[0]{\mathbb{E}}
\newcommand\R[0]{\mathbb{R}}
\newcommand{\argmax}{\operatornamewithlimits{argmax}}
\newcommand\prob[0]{\mathbb{P}}
\crefname{thm}{Theorem}{Theorems}
\Crefname{thm}{Theorem}{Theorems}
\crefname{lm}{Lemma}{Lemmas}
\Crefname{lm}{Lemma}{Lemmata}
\Crefname{clm}{Claim}{Claims}
\crefname{clm}{Claim}{Claims}
\Crefname{cor}{Corollary}{Corollaries}
\crefname{cor}{Corollary}{Corollaries}
\crefname{example}{example}{examples}
\Crefname{example}{Example}{Examples}
\Crefname{table}{Table}{Tables}
\Crefname{equation}{Equation}{Equations}
\crefname{equation}{Eq.\!}{Eqs.\!}
\Crefname{figure}{Figure}{Figures}
\crefname{figure}{Fig.\!}{Figs.\!}
\newcommand{\rev}[1]{\textcolor{black}{#1}}
\newcommand{\revtwo}[1]{\textcolor{black}{#1}}
\begin{document}

\title{Emergence of Hierarchy in Networked Endorsement Dynamics}
\author{Mari Kawakatsu, Philip S. Chodrow, Nicole Eikmeier, and Daniel B. Larremore}
% Use letters for affiliations, numbers to show equal authorship (if applicable) and to indicate the corresponding author
% \author[a*]{Mari Kawakatsu} 
% \author[b*]{Nicole Eikmeier}
% \author[c*]{Philip S. Chodrow}
% \author[d,e]{Daniel B. Larremore}

% \affil[a]{Program in Applied and Computational Mathematics, Princeton University, Princeton, NJ 08544, USA}
% \affil[b]{Department of Computer Science, Grinnell College. Grinnell, IA 50112, USA}
% \affil[c]{Operations Research Center, Massachusetts Institute of Technology. Cambridge, MA 02139, USA}
% \affil[d]{Department of Computer Science, University of Colorado Boulder, Boulder, CO 80309, USA}
% \affil[e]{BioFrontiers Institute, University of Colorado Boulder, Boulder, CO 80303, USA}
% \affil[*]{Equal Contributions}
\maketitle
%\input{01-abstract-ccs}

% \marginnote[26\baselineskip]{Research supported in part by ...
%NSF CAREER CCF-1149756, IIS-1422918, IIS-1546488, NSF Center for Science of Information STC, CCF-0939370, DOE  DE-SC0014543, DARPA SIMPLEX, and the
%Sloan Foundation.
% }

\marginnote[450pt]{\fontsize{7}{9}\selectfont
Mari Kawakatsu \\ Princeton University\\
\texttt{mk28@math.princeton.edu} \\ 
\noindent Philip S. Chodrow \\ 
Massachusetts Institute of Technology \\  
University of California, Los Angeles\\ 
\texttt{phil@math.ucla.edu} \\ 
\noindent Nicole Eikmeier \\ 
Grinnell College\\
\texttt{eikmeier@grinnell.edu} \\ 
\noindent Daniel B. Larremore \\ 
University of Colorado, Boulder \\
\texttt{daniel.larremore@colorado.edu}\\
\vspace{1em}
\noindent MK, PSC, and NE contributed equally to this research. 
}

\begin{abstract}
    \articleonly{}{\vspace{-4em}}
Many social and biological systems are characterized by enduring hierarchies, including those organized around prestige in academia, dominance in animal groups, and desirability in online dating. 
Despite their ubiquity, the general mechanisms that explain the creation and endurance of such hierarchies are not well understood. 
We introduce a generative model for the dynamics of hierarchies using time-varying networks in which new links are formed based on the preferences of nodes in the current network and old links are forgotten over time. 
The model produces a range of hierarchical structures, ranging from egalitarianism to bistable hierarchies, and we derive critical points that separate these regimes in the limit of long system memory. 
\rev{Importantly},
our model supports statistical inference, allowing for a principled comparison of generative mechanisms using data. 
We apply the model to study hierarchical structures in empirical data on hiring patterns among mathematicians, dominance relations among parakeets, and friendships among members of a fraternity,
observing several persistent patterns as well as interpretable differences in the generative mechanisms favored by each. 
Our work contributes to the growing literature on statistically grounded models of time-varying networks.

\end{abstract}

\articleonly{}
{
\section*{Introduction}
\label{sec:intro}
}

\articleonly{\dropcap{H}}{H}ierarchies---stable sets of dominance relationships among~individuals~\cite{fushing2011ranking,hobson2018strategic,hobson2015social}---structure many human and animal societies. 
Among animals, hiearchical rank may determine access to~resources such as food, grooming, and reproduction \cite{holekamp2016aggression}.
Among humans, rank shapes the epistemic capital and employment prospects of researchers \cite{clauset2015hierarchy,morgan2018prestige}, susceptibility of adolescents to bullying  \cite{garandeau2014inequality}, messaging patterns in online dating~\cite{bruch2018aspirational}, and influence in group decision-making  \cite{cheng2014toward}.

A central question concerns how enduring hierarchies shape and are shaped by interactions between individuals. 
Empirical studies have indicated the presence of a \emph{winner effect}: an individual who participates in a favorable interaction, such as winning a fight or receiving an endorsement, increases their likelihood of being favored in future interactions \cite{chase1994aggressive,hogeweg1983ontogeny}. 
Both theoretical work \cite{bonabeau1995phase,bonabeau1996mathematical,hemelrijk1999individual,ben2005dynamics,miyaguchi2020piecewise, posfai2018talent, hickey2019self, vehrencamp1983model, sanchez2018practical} and controlled experiments in humans \cite{salganik2006experimental} suggest that winner effects are sufficient (though not necessary) to form stable hierarchies. 
Mechanistic explanations of winner effects vary. 
A common approach postulates that each individual possesses an \emph{intrinsic strength}, which may depend on factors such as size, skill, or aggression levels. 
For instance, physiological mechanisms such as changes in hormone levels following confrontational interactions \cite{mehta2015dual} can alter an individual's strength, causing the strong to get stronger. 

However, intrinsic strengths are not necessary to produce winner effects. 
If a politician endorses a rival candidate, the latter does not become intrinsically more fit for office; instead, the endorsee builds support for their candidacy that may lead to future endorsements. 
The fame of the endorser is key: the better-known the endorser, the more valuable the endorsement. 
We refer to such prestige by proxy as \emph{transitive prestige}. 
Since transitive prestige enables hierarchical rank to flow through interactions between individuals, networks provide a natural lens through which to study its role.
Recent empirical studies have emphasized the networked nature of hierarchy in biological and social groups \cite{shizuka2015network, ball2013friendship, pinter2014dynamics, hobson2015social, hobson2018strategic}. 
Several theoretical studies \cite{konig2011network, konig2014nestedness, bardoscia2013social, krause2013spontaneous} have also investigated reinforcing hierarchy using time-varying network models called \emph{adaptive networks} \cite{sayama2013modeling, porter2020nonlinearity+}. In this class of models, edges, representing interactions, evolve in response to node states and vice versa. 
Edges tend to accrue to important or highly central nodes, leading to self-reinforcing hierarchical network structures. 
Despite their recent uses, adaptive networks are often difficult to analyze analytically or compare to empirical data. 

We present a novel and flexible adaptive network model of social hierarchy that addresses these challenges. 
Winner effects in our model are driven entirely by social reinforcement rather than intrinsic strengths. 
We allow arbitrary matrix functions to determine rank or prestige of nodes in the network, and introduce parameters governing the behavior of individuals in response to rank.  
\rev{A key feature of our model is that it}
is amenable both to mathematical analysis and to statistical inference. 
We analytically characterize a critical transition separating egalitarian and hierarchical model states for several choices of ranking function. 
We also explore hierarchical patterns in four biological and social data sets, using our model to perform principled selection between competing ranking methods in each data set and highlight persistent macroscopic patterns. 
We conclude with a discussion of potential model extensions and connections to recent work on centrality in temporal networks. 
\section*{Modeling Emergent Hierarchy}
\label{sec:model}

In our adaptive network model, new directed edges are formed based on existing, node-based hierarchy, after which they decay over time. 
We conceptualize a directed edge $i\to j$ as an \emph{endorsement}, in which $i$ affirms that $j$ is fit, prestigious, or otherwise of high quality. 
For example, endorsements could capture contests won by $j$ over $i$, retweets of $j$ by $i$, or comparisons in which a third party ranks $j$ above $i$. 
We collect endorsements in a weighted directed network on $n$ nodes summarized by its adjacency matrix $\mA\in \R^{n\times n}$, where entry $a_{ij}$ is the weighted number of interactions $i\to j$. 
The matrix $\mA$ evolves in discrete time via the iteration 
\begin{equation}
	\mA\tplusone = \lambda \mA\t + (1-\lambda)\mDelta\t\;. \label{eq:dynamics}
\end{equation}
Here, the \textit{update matrix} $\mDelta\t$ contains new endorsements at time $t$. 
The \emph{memory parameter} $\lambda \in [0,1]$ represents the rate with which memories of old endorsements decay; the smaller the value of $\lambda$, the more quickly previous endorsements are forgotten. 

The new endorsements in $\mDelta\t$ depend on previous endorsements through a ranking of the $n$ nodes, which we call the \emph{score vector} (or simply \emph{score}) $\vs \in \R^n$. 
The score vector is the output of a \textit{score function} $\sigma: \mA \mapsto \vs \in \mathbb{R}^n$, which may be any rule that assigns a real number to each node. 

\rev{
	We consider three score functions chosen for analytical tractability and relevance in applications.
	Let $\mD^{\text{in}}$ and $\mD^{\text{out}}$ be diagonal matrices whose entries are the weighted in- and out-degrees of the network, i.e., $\mD^{\text{in}}_{ii} = \sum_{j} \mA_{ij}$ and $\mD^{\text{out}}_{ii} = \sum_{j} \mA_{ji}$.
	First, the Root-Degree score is the square root of the in-degree---the weighted number endorsements---of each node $i$, defined as $s_i = \sqrt{\mD^{\text{in}}_{ii}}$.
	The Root-Degree score function does not model transitive prestige, since only the number of endorsements is considered, not the prestige of the agents from which they come. 
	Second, PageRank~\cite{brin1998anatomy} is a recursive notion of rank in which high-rank nodes are those whose endorsers are numerous, and themselves  high rank. 
	The foundational algorithm used by Google in ranking webpages, PageRank computes a value for each node interpretable as the proportion of time that a random surfer following the network of endorsements would spend on that node. 
	We define PageRank score $\vs$ as the PageRank vector of $\mA^T$, which is the unique solution to the system  
	\begin{align}
		\left[\alpha_p \mA^T(\mD^{\text{out}})^{-1} + (1-\alpha_p)n^{-1}\ve\,\ve^T\,\right]\vs = \vs
	\end{align}
	up to scalar multiplication. 
	Here, $\alpha_p \in [0, 1]$ is the so-called teleportation parameter, for which we use the customary value $\alpha_p = 0.85$. 
	We normalize the PageRank vector so that $\ve^T\vs = n$, where $\ve$ is the vector of ones.
}
\rev{
	Finally, SpringRank~\cite{debacco2018springrank} is another recursive definition of rank in which endorsers are ranked one unit below endorsees, and disagreements are resolved using an analogy to a physical system of springs: the ranking of nodes minimizes the total energy of the system.
	Mathematically, the SpringRank score $\vs$ is the unique solution to the linear system \cite{debacco2018springrank}
	\begin{align}
		\left[\mD^{\text{in}} + \mD^{\text{out}} - (\mA + \mA^T) + \alpha_s \mI\right]\vs 
		= 
		\left[ \mD^{\text{in}} - \mD^{\text{out}}\right] \ve\;, \label{eq:springrank}
	\end{align} 
	with the identity matrix $\mI$ and a regularization parameter $\alpha_s>0$ which ensures the uniqueness of $\vs$. 
	Unlike the Root-Degree score, both PageRank and SpringRank scores model transitive prestige: the impact of an endorsement depends on the prestige of the endorser.  
	These three score functions can all be interpreted as rankings or centrality measures, although \revtwo{this property is not required of score functions} in our model. 
}

Given score vector $\vs$, new endorsements $\mDelta$ are chosen using a random utility model, a standard \revtwo{framework} in discrete choice theory which has recently been applied in models of growing networks \cite{overgoor2019choosing}.
At time step $t$, node $i$ is selected uniformly at random. 
We suppose that endorsing $j$ has utility $u_{ij}(\vs)$ for~$i$, which depends on the current scores. 
In this work, we focus on utilities of the functional form 
\begin{align}
	u_{ij}(\vs) = \beta_1 s_j + \beta_2(s_i - s_j)^2\;, \label{eq:utility_function}
\end{align}
where we generally assume that $\beta_1 > 0$ and $\beta_2 < 0$.  
The \revtwo{parameter $\beta_1$ captures} a \textit{preference for prestige}; \revtwo{a positive value of $\beta_1$ indicates a tendency} to endorse others with high scores.
The \revtwo{parameter $\beta_2$ captures} a \textit{preference for proximity}; \revtwo{a negative value of $\beta_2$ indicates a tendency} to endorse others with scores relatively similar to their own.
Many other choices of utility functions are possible; we prove a stability theorem for a large class of these functions in\articleonly{SI Appendix}{\Cref{sec:bifurcations_proof}}.

\articleonly{\begin{figure}}{\begin{tuftefigure}}
	\begin{center}
		\articleonly{
		\includegraphics[width=1\linewidth]{fig_1.pdf}
		}{
		\centering
		\includegraphics[width=0.8\linewidth]{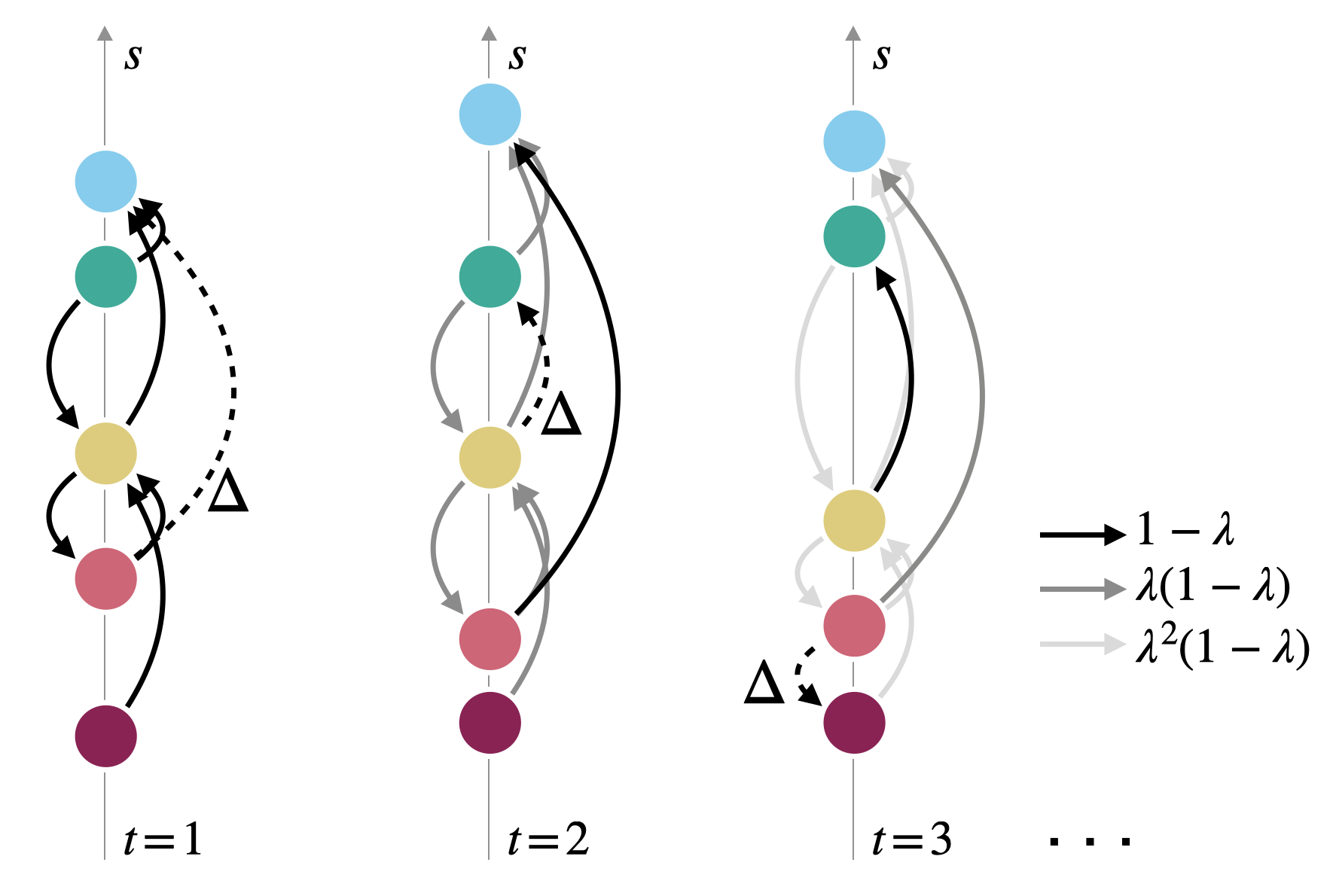}}
		\caption{
			Schematic illustration of our model dynamics. 
			Nodes are initialized at time $t = 1$ with a set of pre-existing endorsements logged in $\mA$ (solid arrows) and the score $\vs = \sigma(\mA)$ is computed (vertical axis). 
			Then, a new edge logged in $\mDelta$ is added (dashed line). 
			In the next time step $t = 2$, old interactions decay by a factor of $\lambda$ (grey arrows). 
			The new endorsement and decay of previous endorsements lead to an updated score function, which then informs the next time step.
		}
		\label{fig:schematic}
	\end{center}
\articleonly{\end{figure}}{\end{tuftefigure}}

In the random utility model, node $i$ observes all possible utilities subject to \rev{noise. 
Traditionally, this noise is chosen to be Gumbel-distributed, in which case the probability that endorsing $j$ yields the greatest utility is given by the multinomial logit} \cite{train2009discrete}
\begin{equation}
	p_{ij}\left(\vs\right) = 
	\dfrac{e^{u_{ij}\left(\vs\right)}}{\sum_{j = 1}^n e^{u_{ij}\left(\vs\right)}}\;. \label{eq:utility}
\end{equation}
\rev{We collect $m\in\mathbb{N}$ endorsements in an update matrix $\mDelta$, where the entry $\mDelta_{ij}$ gives the number of times that $i$ endorses $j$ in the time step.} 
More complex random utility models can lead to more realistic structures in  networks with a growing number of nodes \cite{gupta2020mixed}; we do not pursue these complications here because our model does not focus on network growth, and because these complications obstruct analytical insight. 

\revtwo{
\Cref{eq:utility} can also be derived from an alternative model in which node $i$ makes a randomized choice among $n$ nodes to endorse. In this model, the option to endorse $j$ is assigned a deterministically-observed weight proportional to $e^{u_{ij}(\vs)}$. 
In this case, $\beta_1$ and $\beta_2$ signify inverse temperatures that tune the degree of randomness in this choice, with lower values corresponding to greater randomness.
Although this alternative model---in which node $i$ makes a noisy choice between deterministically-observed utilities---and the random utility model---in which node $i$ makes a deterministic choice between noisily-observed utilities---are mathematically equivalent, the two formulations can lead to different interpretations of system behavior. 
In the case of institutional faculty hiring discussed below (see Hierarchies in Data), the random utility model assumes that a hiring committee makes imperfect observations of the utilities of the institutions from which they could hire, and then deterministically chooses the highest of these imperfectly-observed qualities. 
In contrast, the alternative framework assumes that the committee makes a perfect observation of the utilities, but then chooses among them with some degree of randomness, which may reflect dissension on the hiring committee, search-specific priorities, among other factors. 
} 

\articleonly{\begin{figure}}{\begin{tuftefigure}}
	\articleonly{
		\centering
		\includegraphics[width=0.48\textwidth]{dynamics_examples_v5.pdf}
	}{
		\centering\includegraphics[width=\textwidth]{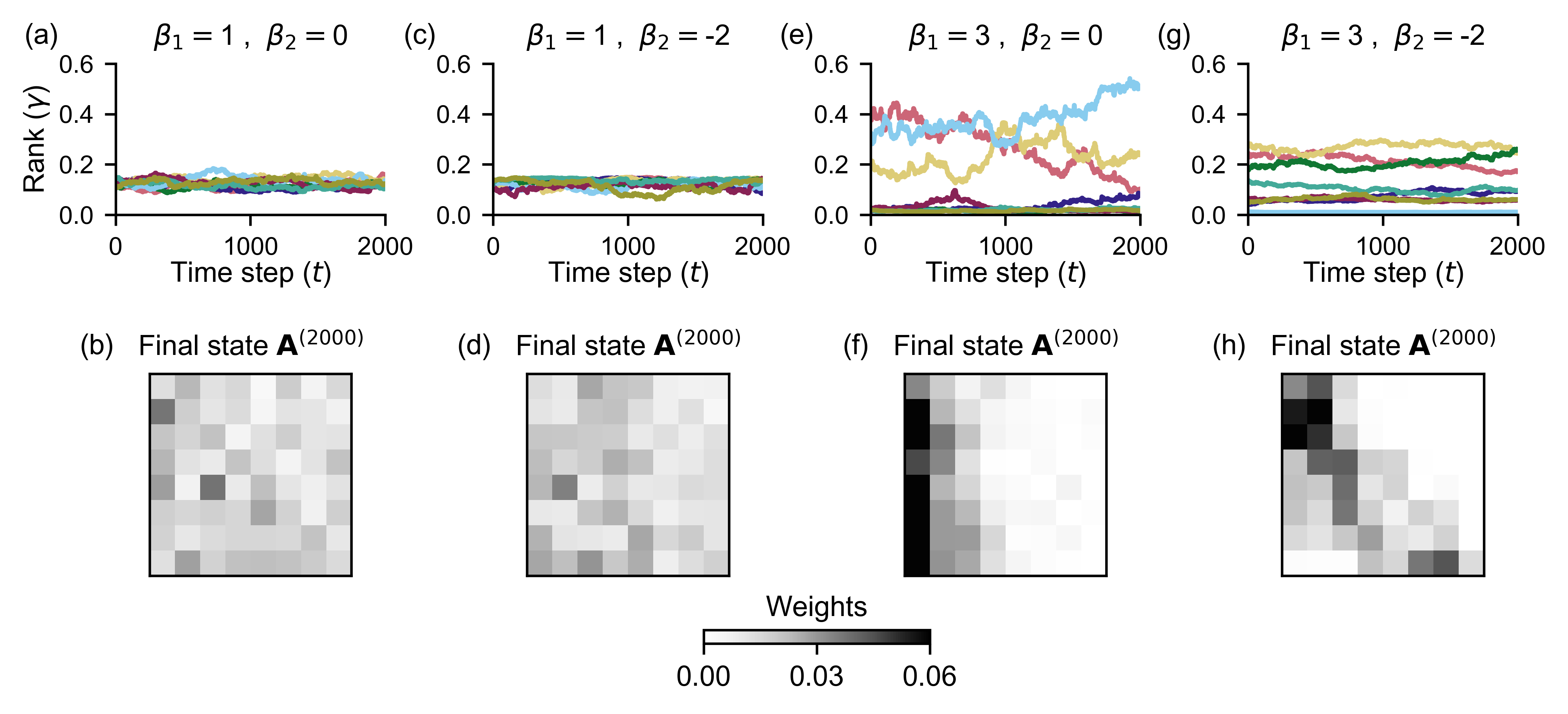}
	}
		\caption{Representative dynamics of \revtwo{the proposed} model. Each column shows a population of $n = 8$ nodes simulated for 2,000 time steps using the SpringRank score function with \rev{$m\,{=}\,1$ update per time step}, varying the preference parameters $\beta_1$ and $\beta_2$. Panels (a),~(c),~(e),~and~(g) show the simulated rank vector $\vgamma$ over time; different colors track the ranks of different nodes. Panels (b),~(d),~(f),~and~(h) show the adjacency matrix $\mA$ at time step $t\,{=}\,2000$ for the corresponding parameter combinations. 
		\rev{See SI Appendix, Fig.~S1 for additional examples with SpringRank; SI Appendix, Fig.~S2 for examples PageRank; and SI Appendix, Fig.~S3 for examples with Root-Degree. See SI Appendix, Fig.~S4 for the dependence of rank variance on $\beta_1$ and $\beta_2$ jointly.} Parameters: $\lambda\,{=}\,0.995$, \rev{$\alpha_s\,{=}\,10^{-8}$}.
	}
	\label{fig:traces}
\articleonly{\end{figure}}{\end{tuftefigure}}

\Cref{eq:dynamics,eq:utility} capture key features of our model. 
First, the dynamics in \cref{eq:dynamics} imply that past interactions decay geometrically at rate $\lambda$. 
This global, gradual decay contrasts with another rank-based relinking model in which single edges fully disappear within each time step \cite{konig2011network}. 
Second, \cref{eq:utility} implies that the likelihood of a node being endorsed at a given time step depends only on the distribution of previous endorsements and \rev{not on intrinsic strength or desirability}.
Those who receive more endorsements and therefore obtain higher scores are more likely to be endorsed in the future---a mechanistic instantiation of winner effects via social reinforcement. 

\Cref{fig:schematic} \revtwo{schematically illustrates model} dynamics \rev{with $m=1$ endorsement per time step.} 
At time $t = 1$, the model is initialized with a small number of endorsements logged in $\mA$. 
\revtwo{The score function takes $\mA$ as an input and outputs the score vector $\vs$, which in turn determines a new interaction according to \cref{eq:utility}}. Logged in $\mDelta$,  \revtwo{this new interaction is} weighted by $1-\lambda$ \revtwo{and added to the previous endorsements, which are} discounted by $\lambda$.  
% Then, all previous endorsements are discounted by $\lambda$. 
This process repeats \revtwo{over time} with new endorsements gradually replacing old ones in the system's memory, \revtwo{sequentially updating} the score vector $\vs$. 
\Cref{fig:schematic} also depicts in stylized fashion the operation of both a winner effect ($\beta_1 > 0$), in which endorsements tend to flow in the direction of increasing score, and a proximity effect ($\beta_2 < 0$), in which endorsements tend to flow between nodes of similar scores. 
The net effect is that most endorsements are ``short hops'' up the hierarchy. As we will discuss, this is a common pattern in empirical data.

Despite its simplicity, the model displays a wide range of behaviors. 
To observe them, we define a \textit{rank vector} $\vgamma$, whose $j$th entry $\gamma_j =  n^{-1} \sum_{i} p_{ij}$ gives the likelihood that a new endorsement flows to $j$.
We say that the system state is \emph{egalitarian} when all ranks $\gamma_j$ are equal and \emph{hierarchical} otherwise. 
\Cref{fig:traces} illustrates representative behaviors when the SpringRank score is used. 
When $\beta_1$ is relatively small, winner effects are overtaken by noise, and the system settles into an approximately egalitarian state (\cref{fig:traces}a,b). 
When $\beta_1$ is relatively large, persistent hierarchies emerge  (\cref{fig:traces}{c-f}). 
Moreover, the distribution and stability of ranks depend on the strength of proximity effects, modeled by the quadratic term in the utilities. 
For $\beta_2 = 0$ (no proximity preference), a single node garners more than half of endorsements in a hierarchy with significant fluctuations (\cref{fig:traces}c,d). 
Adding a proximity preference leads to a marginally more equitable hierarchy with ranks that are nearly constant in time (\cref{fig:traces}e,f).

\section*{The Long-Memory Limit}
\label{sec:bif}

The behavior observed in \cref{fig:traces} suggests the presence of qualitatively distinct regimes depending on \revtwo{prestige preference} $\beta_1$. 
For small $\beta_1$ (\cref{fig:traces}a), the winner effect is weak and approximate egalitarianism prevails. 
For larger $\beta_1$, a stronger winner effect enforces a stable hierarchy. 
We characterize the boundary between these regimes analytically in the \emph{long-memory limit} $\lambda \to 1$ by defining a function $\vf$ which is analogous to a deterministic time-derivative for the dynamics of our discrete-time stochastic process.
Let
\begin{align}
    \vf(\vs,\mA) = \lim_{\lambda \rightarrow 1} \frac{\E[\sigma(\lambda \mA + (1-\lambda)\mDelta)] - \vs}{1-\lambda}\;, \label{eq:limit_f}
\end{align}
where the expectation is taken with respect to $\mDelta$. If $\vf(\vs, \mA) = 0$ for all $\mA$, the score vector $\vs$ is a fixed point of the model dynamics in expectation. 
Our choices of Root-Degree, PageRank, and SpringRank score functions admit closed-form expressions for $\vf$, allowing us to analytically derive the conditions for the stability of egalitarianism in the limit of long memory. 

\articleonly{\begin{figure}}{\begin{tuftefigure}}
    \articleonly{
        \includegraphics[width=0.49\textwidth]{bifurcations_with_curves.pdf}
    }{
        \centering
        \includegraphics[width=.8\linewidth]{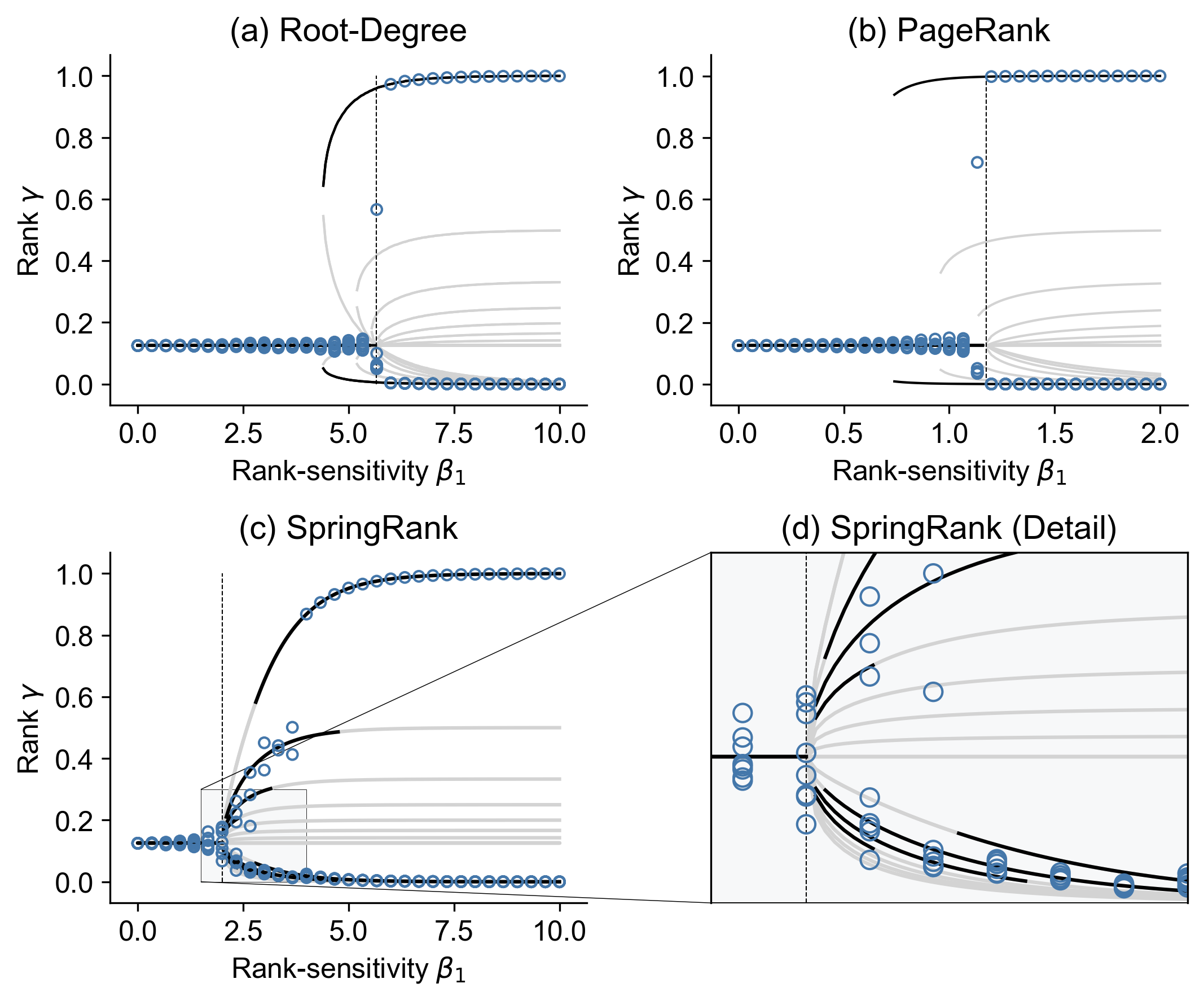}
    }
    \caption{
        Bifurcations in models with Root-Degree, PageRank, and SpringRank score functions with $\beta_2 = 0$ \rev{and $m = 1$ update per time step}.  
        Points give the value of the rank vector $\vgamma$ averaged over the final 500 time-steps of a $5\times 10^{4}$-step simulation with $n = 8$ nodes, memory parameter $\lambda = 0.9995$, and varying $\beta_1$ specified by the horizontal axis. 
        Solid curves show stationary points of the long-memory dynamics obtained by numerically solving the equation $\vf(\vs,\mA) = \vzero$, subject to the restriction that nodes separate into two groups with identical ranks in each. 
        Black curves are linearly stable, while grey curves are unstable. 
        Stability was determined by studying the spectrum of the Jacobian matrix of $\vf$. 
        Vertical lines give the critical value $\beta^c_1$ at which the egalitarian solution becomes linearly unstable according to \Cref{thm:bifurcations}. \rev{Parameters: $\alpha_p = 0.85$, $\alpha_s = 10^{-8}$.}
    }
    \label{fig:bifurcations}
    \articleonly{\end{figure}}{\vspace{-0.15in}\end{tuftefigure}}

\begin{thm} \label{thm:bifurcations}
    For each of the Root-Degree, PageRank, and SpringRank score functions, $\vf$ has a unique egalitarian root. 
    This root is linearly stable if and only if $\beta_1 < \beta_1^c$, where 
    \begin{align*}
        \beta^c_1 = \begin{cases}
            2\sqrt{\dfrac{n}{m}} &\quad \text{Root-Degree},\\
            1/\alpha_p &\quad \text{PageRank}, \\
            2 + \alpha_s \dfrac{n}{m} &\quad \text{SpringRank}.\\ 
        \end{cases}
    \end{align*}
\end{thm}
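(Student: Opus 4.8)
The plan is to first turn $\vf$ into something tractable. Writing $\lambda\mA+(1-\lambda)\mDelta=\mA+(1-\lambda)(\mDelta-\mA)$ and Taylor-expanding $\sigma$ about $\mA$ gives $\bigl[\sigma(\lambda\mA+(1-\lambda)\mDelta)-\sigma(\mA)\bigr]/(1-\lambda)\to D\sigma(\mA)[\mDelta-\mA]$ as $\lambda\to1$, where $D\sigma(\mA)$ is the Fréchet derivative of the score function. Since each endorsement chooses its source uniformly and then endorses $j$ with probability $p_{ij}(\vs)$, we have $\E[\mDelta]=\tfrac{m}{n}\mathbf{P}(\vs)$ with $\mathbf{P}(\vs)_{ij}=p_{ij}(\vs)$, so $\vf(\vs,\mA)=D\sigma(\mA)\bigl[\tfrac{m}{n}\mathbf{P}(\vs)-\mA\bigr]$, and the long-memory dynamics is the flow $\dot{\mA}=\mathbf{G}(\mA):=\tfrac{m}{n}\mathbf{P}(\sigma(\mA))-\mA$, with $\vf$ its expression in score coordinates. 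Fixed points of $\mathbf{G}$ satisfy $\mA=\tfrac{m}{n}\mathbf{P}(\sigma(\mA))$, hence correspond bijectively to fixed points of $T(\vs):=\sigma\bigl(\tfrac{m}{n}\mathbf{P}(\vs)\bigr)$. When $\vs$ has equal entries every utility $u_{ij}(\vs)=\beta_1 s_j+\beta_2(s_i-s_j)^2$ is independent of $j$, so $\mathbf{P}(\vs)=\tfrac1n\mathbf{J}$ (with $\mathbf{J}=\ve\ve^{T}$ the all-ones matrix) and $T(\vs)=\sigma(\tfrac{m}{n^2}\mathbf{J})$; thus $T$ collapses all egalitarian score vectors to the single vector $\vs^{*}:=\sigma(\tfrac{m}{n^2}\mathbf{J})$, which direct evaluation shows is itself egalitarian for each score function ($\sqrt{m/n}\,\ve$ for Root-Degree, $\ve$ for PageRank, $\vzero$ for SpringRank). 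Hence $\vs^{*}$ is the unique egalitarian root, with associated fixed adjacency matrix $\mA^{*}:=\tfrac{m}{n^2}\mathbf{J}$.

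For stability I linearize $\mathbf{G}$ at $\mA^{*}$. With $K:=\tfrac{m}{n}D_{\vs}\mathbf{P}(\vs^{*})$ and $J:=D\sigma(\mA^{*})$ the Jacobian is $KJ-\mathcal{I}$ on $\R^{n\times n}$, and $\mA^{*}$ is linearly stable iff this operator's spectrum lies in the open left half-plane. Differentiating the multinomial logit at $\vs^{*}$, the quadratic term of the utility contributes nothing, since the gradient of $(s_i-s_j)^2$ vanishes on the diagonal $s_i=s_j$; one finds $\bigl(D_{\vs}\mathbf{P}(\vs^{*})[\mathbf{h}]\bigr)_{ij}=\tfrac{\beta_1}{n}\bigl(h_j-\tfrac1n\sum_{\ell}h_{\ell}\bigr)$, independent of $i$, so $K[\mathbf{h}]=\tfrac{\beta_1 m}{n^2}\,\ve\,(\mathbf{C}\mathbf{h})^{T}$ with $\mathbf{C}=\mI-\tfrac1n\mathbf{J}$ the centering matrix. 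In particular $\beta_2$ drops out of the linearization, which is why the threshold depends on $\beta_1$ alone. The range of $K$ is the $(n-1)$-dimensional space $V:=\{\ve\mathbf{w}^{T}:\ve^{T}\mathbf{w}=0\}$, so $KJ$ maps everything into $V$ and is block upper-triangular with respect to $V$ and any complement; its spectrum is $\{-1\}$ with multiplicity $n^2-n+1$ (directions that do not feed back into the scores, and are manifestly stable) together with the spectrum of $(KJ-\mathcal{I})|_{V}$. Because $\mA^{*}$ is fixed by simultaneous row/column permutations and each $\sigma$ is permutation-equivariant, $J$ intertwines the two $S_n$-actions; restricted to $V$ — which carries a single copy of the standard representation — Schur's lemma forces $J(\ve\mathbf{x}^{T})=\kappa\,\mathbf{x}$ for a scalar $\kappa$. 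Therefore $(KJ)|_{V}=\tfrac{\beta_1 m\kappa}{n^2}\mathcal{I}_{V}$, the remaining $n-1$ eigenvalues all equal $\tfrac{\beta_1 m\kappa}{n^2}-1$, and $\mA^{*}$ is linearly stable iff $\beta_1<\beta_1^{c}:=n^2/(m\kappa)$.

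It then remains only to compute $\kappa$, i.e.\ the first-order change of the score when $\mA^{*}$ is perturbed by $\ve\mathbf{x}^{T}$ with $\ve^{T}\mathbf{x}=0$ — a perturbation that changes the weighted number of endorsements received by node $j$ by $nx_j$ while leaving endorsements given unchanged. For Root-Degree, $s_j=\sqrt{\mD^{\text{in}}_{jj}}$, which equals $\sqrt{m/n}$ at $\mA^{*}$, so $\kappa=\tfrac{n}{2\sqrt{m/n}}=\tfrac{n^{3/2}}{2\sqrt{m}}$ and $\beta_1^{c}=2\sqrt{n/m}$. For PageRank, $\vs$ is the normalized Perron eigenvector of a column-stochastic matrix $\mathbf{M}(\mA)$; the perturbation leaves the out-degrees (hence $\mathbf{M}$'s column sums) unchanged, so the Perron eigenvalue stays frozen at $1$, and implicit differentiation of $\mathbf{M}\vs=\vs$ at $\mathbf{M}^{*}=\tfrac1n\mathbf{J}$ gives $(\mathbf{M}^{*}-\mI)\dot{\vs}=-\dot{\mathbf{M}}\vs^{*}$ with $\dot{\mathbf{M}}\vs^{*}=\tfrac{\alpha_p n^2}{m}\mathbf{x}$; solving in $\ve^{\perp}$ (as forced by the normalization $\ve^{T}\vs=n$) yields $\kappa=\alpha_p n^2/m$ and $\beta_1^{c}=1/\alpha_p$. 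For SpringRank, $\vs=\mL(\mA)^{-1}\mathbf{b}(\mA)$ with right-hand side $\mathbf{b}^{*}=\vzero$ at $\mA^{*}$, so only $\dot{\mathbf{b}}$ and $\mL^{*}$ enter: $\mL^{*}\dot{\vs}=\dot{\mathbf{b}}=n\mathbf{x}$, and since $\mL^{*}=\bigl(\tfrac{2m}{n}+\alpha_s\bigr)\mI-\tfrac{2m}{n^2}\mathbf{J}$ acts on $\ve^{\perp}$ as the scalar $\tfrac{2m}{n}+\alpha_s$, we get $\kappa=n^2/(2m+\alpha_s n)$ and $\beta_1^{c}=2+\alpha_s n/m$.

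The derivation of $\vf$, the identification of the egalitarian root, and the splitting of the spectrum into the $(n-1)$-dimensional ``hierarchical'' block and the stable $(n^2-n+1)$-dimensional remainder are routine; the crux is the model-by-model evaluation of $\kappa$. Of the three, PageRank is the most delicate: one must verify that the perturbation preserves column-stochasticity so that the Perron eigenvalue does not move, check the solvability condition $\ve^{T}\dot{\mathbf{M}}\vs^{*}=0$, and invert the singular operator $\mathbf{M}^{*}-\mI$ in exactly the complement picked out by the normalization. SpringRank relies on $\alpha_s>0$ to make $\mL^{*}$ invertible, which is precisely why $\beta_1^{c}$ carries the $\alpha_s$ term, while Root-Degree is immediate. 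One must also track carefully which degree (in- or out-) the perturbation $\ve\mathbf{x}^{T}$ actually moves, since that is what decides whether the hierarchical block of the Jacobian is nonzero at all.
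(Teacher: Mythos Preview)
Your argument is correct and reaches the same thresholds, but the route differs genuinely from the paper's. The paper handles the three score functions one at a time: for each it derives an explicit closed form for $\vf$ in score coordinates, verifies the egalitarian root by direct substitution, and then computes the score-space Jacobian $\partial\vf/\partial\vs$ by hand, reading off the critical eigenvalue case by case (the PageRank case in particular requires a somewhat involved manipulation of the perturbed eigenvector equation before the Jacobian can be evaluated). You instead linearize the full matrix-space flow $\dot{\mA}=\tfrac{m}{n}\mP(\sigma(\mA))-\mA$, factor the nontrivial part of the Jacobian as $KJ$ with rank $n-1$, and then use permutation equivariance together with Schur's lemma to collapse the interesting block to a single scalar $\kappa$, yielding the uniform formula $\beta_1^c=n^2/(m\kappa)$. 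Your approach is cleaner and extends immediately to any permutation-equivariant score function; the paper's approach, in exchange, produces explicit expressions for $\vf$ and the associated nonlinear fixed-point equations, which are precisely what is solved numerically to trace the non-egalitarian equilibrium branches in \cref{fig:bifurcations}. One point you should make explicit: you establish linear stability of $\mA^*$ in $\R^{n\times n}$, whereas the theorem as stated concerns linear stability of the root of $\vf$, which the paper interprets as negativity of the score-space Jacobian at $(\vs^*,\mA^*)$. The two notions coincide here because the nonzero eigenvalues of $KJ$ and of $JK$ agree, but this identification deserves a sentence.
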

\noindent In\articleonly{SI Appendix}{\Cref{sec:bifurcations_proof}}, we prove \Cref{thm:bifurcations}, \rev{as well as a generalization to arbitrary smooth utility functions.} 
In each case, the proof of uniqueness exploits the algebraic structure of the score function, and the critical value $\beta^c_1$ is obtained via the linearization of $\vf$ about the egalitarian state. 
Interestingly, only $\beta_1$ plays a role in the stability of the egalitarian root. 
\rev{While \revtwo{proximity preference} $\beta_2$ does not determine where the hierarchical regime begins, it does influence the structure of and the transient dynamics toward nonegalitarian equilibria (\cref{fig:traces}(e),(g)).}

\Cref{fig:bifurcations} illustrates the destabilization of egalitarianism predicted by \Cref{thm:bifurcations} in the case of $n = 8$ nodes.
Although not required by \Cref{thm:bifurcations}, we fix $\beta_2 = 0$ for simplicity.
Curves show fixed points of the model dynamics in the long-memory limit. 
We show only fixed points in which nodes separate into two groups, each of which have identical rank. 
For $\beta_1 < \beta^c_1$, the egalitarian regime is stable and the long-run state deviates from egalitarianism only slightly. 
For $\beta_1 > \beta^c_1$, in contrast, the long-run state switches to an inegalitarian, stable fixed point. 

In the Root-Degree and PageRank models, there is a single stable inegalitarian equilibrium with one node absorbing nearly all endorsements (\cref{fig:bifurcations}a,b). 
Interestingly, there is a bistable regime in which both egalitarian and inegalitarian states are attracting. Whether the system converges to one or the other depends on initial conditions. 
The SpringRank model displays qualitatively distinct behavior (\cref{fig:bifurcations}c,d). 
Past $\beta^c_1$, we observe staggered multistable regimes. As $\beta_1$ increases, equilibria with multiple elite (i.e., highly ranked) nodes become sequentially unstable until eventually only a single elite node remains. 
The long-term behavior of the system again depends on initial conditions, but now there are many more possible stable states. 
This behavior would seem to make the SpringRank score function especially appropriate for modeling empirical systems with multiple distinct hierarchical regimes and sensitivity to initial conditions, an intuition which we confirm empirically in the following section.

\section*{Hierarchies in Data} \label{sec:inference}

In addition to being amenable to analytical treatment, our model has
a tractable likelihood function, described in\articleonly{SI Appendix}{\Cref{sec:inference}}.
This allows us to study hierarchical structures in empirical data using principled statistical inference. 
The likelihood function not only supports maximum-likelihood parameter estimates of $ \lambda$, $\beta_1$, and $\beta_2$ but also enables direct comparisons of different score functions in a statistically rigorous framework: score functions with higher likelihoods provide more predictive low-dimensional summaries of observed interactions.  
This in turn allows us to explore the relative value of competing mechanistic explanations of observed data. 

Several mathematical features of the model facilitate the exploration of real data. First, the predictive distribution \cref{eq:utility} is in the linear exponential family, making the estimation of $\vbeta$ a convex optimization with a unique solution. Second, the estimation problem in $\hat{\lambda}$ is in general nonconvex, but can be tractably solved via first-order optimization methods with multiple starting points. Finally, while model likelihoods evaluated on training data may in principle be inflated due to overfitting, our model uses only three parameters to fit hundreds or thousands of observations, suggesting that overfitting is not a major concern. 

We conducted a comparative study of model behavior on four data sets: an academic exchange network in math, two networks of parakeet interactions, and a network of friendships among members of a fraternity.
The Math~PhD~Exchange data set is extracted from The Mathematics Genealogy Project \cite{mathgene,taylor2017eigenvector}.
Nodes are universities. 
An interaction $i\rightarrow j$ at time $t$ occurs when a mathematician who received their degree from university $j$ at time $t$ supervises one or more PhD theses at university $i$.
This event is a proxy for university $i$ hiring a graduate from university $j$ at a time near $t$. 
We view this as an endorsement by $j$ that graduates of $i$ are of high quality~\cite{clauset2015hierarchy}.
We restricted our analysis to the activity of the 70 institutions that placed the most graduates between 1960 and 2000. Doing so helped to avoid singularities produced by institutions with no placements early in the time period and to minimize temporal boundary effects associated with the beginning and end of data collection.

The two Parakeet data sets \cite{hobson2015social} record aggression events in two distinct groups of birds studied over four observation quarters (weeks). 
An interaction $i\rightarrow j$ at~time $t$ occurs when parakeet $i$ loses a fight to parakeet $j$ in period~$t$.
Since there are just four observation periods, estimates of the memory parameter $\lambda$ should be approached with caution. 

Lastly, the Newcomb Fraternity data set was collected by the authors of 
\citet{nordlie1958longitudinal} and \citet{newcomb1916acquaintance}
\rev{and accessed via the KONECT network database \cite{kunegis2013konect}.}
The data set documents friendships among members of a fraternity at the University of Michigan. 
\rev{
Each week during a fall semester, excluding a week for fall break, 
each of 17 cohabiting brothers ranked every other brother according to friendship preference, with ranks 1 and 16 referring to that brother's most and least preferred peers, respectively. 
}
An endorsement $i\rightarrow j$ is logged when brother $i$ ranks $j$ among his top $k = 5$ peers (small changes to $k$ did not significantly alter the results). 
While friendship is often viewed as a symmetric relationship, expressed friendship preferences may be asymmetric \cite{carley1996cognitive}.  

\setcounter{figure}{0}
\renewcommand{\figurename}{Table} % Weird hack to avoid all bold caption?
\articleonly{\begin{figure}}{\begin{tuftefigure}}
    \begin{center}

\begin{tabular}{l l r r r}
                        &                 &  Root-Degree     &   PageRank      &  SpringRank    \\ 
    \toprule
    Math PhD            & $\hat{\lambda}$ &  0.87 (0.01)     &   0.96 (0.01)   &  0.91 (0.01)   \\ 	
    Exchange            & $\hat{\beta}_1$ &  1.28 (0.02)     &   0.74 (0.01)   &  2.99 (0.04)   \\ 
    (\emph{N} = 6,019)  & $\hat{\beta}_2$ & -0.18 (0.01)     &  -0.07 (0.00)   & -1.12 (0.04)   \\ 
    \cmidrule{2-5}
                        & $\mathcal{L}$   & \textbf{-14,379} & -15,001         & -14,927        \\
    \midrule
    Parakeets (G1)      & $\hat{\lambda}$ &  0.97 (0.08)     &  0.59 (0.08)    &  0.67 (0.14)   \\ 	
    (\emph{N} = 838)    & $\hat{\beta}_1$ &  0.84 (0.05)     &  1.82 (0.08)    &  3.03 (0.16)   \\ 
                        & $\hat{\beta}_2$ & -0.12 (0.01)     & -0.50 (0.03)    & -1.74 (0.12)   \\ 
    \cmidrule{2-5}
                        & $\mathcal{L}$   & -1,106           & -1,053          & \textbf{-964}  \\
    \midrule 
    Parakeets (G2)      & $\hat{\lambda}$ &  0.42 (0.07)     &  0.13 (0.03)    &  0.40 (0.06)   \\ 	
    (\emph{N} = 961)    & $\hat{\beta}_1$ &  0.62 (0.03)     &  0.82 (0.04)    &  2.86 (0.14)   \\ 
                        & $\hat{\beta}_2$ & -0.06 (0.01)     & -0.12 (0.01)    & -1.46 (0.12)   \\ 
    \cmidrule{2-5}
                        & $\mathcal{L}$   & -975             & -1029           & \textbf{-924}  \\
    \midrule 
    Newcomb             & $\hat{\lambda}$ &  0.56 (0.13)     &  0.81 (0.19)    &  0.71 (0.14)   \\ 	
    Fraternity          & $\hat{\beta}_1$ &  0.95 (0.05)     &  1.21 (0.07)    &  2.33 (0.14)   \\ 
    (\emph{N} = 1,428)  & $\hat{\beta}_2$ & -0.08 (0.03)     & -0.25 (0.05)    & -0.86 (0.16)   \\ 
    \cmidrule{2-5}
                        & $\mathcal{L}$   & -1,850           & -1,865          & \textbf{-1,841}\\
    \midrule
    % & \multicolumn{4}{l}{$^*$All parameters shown are nonzero with $p < 0.01$.} 
\end{tabular}
    \end{center}
    \caption{
    Parameter estimates and likelihood scores using each of three score functions for the four data sets \rev{described in the main text}. 
    \rev{Parenthetical values are} standard errors for each parameter estimate.  
    For each data set, the largest log-likelihood $\cL$ is indicated in \textbf{bold}. 
    All parameter estimates are statistically distinct from zero at 95\% confidence. 
    $N$ gives the total number of interactions in the data. 
    \rev{See\articleonly{SI Appendix, Fig.~S5}{\cref{fig:trace-inferred-params}} for simulated trajectories with the inferred parameters.}
    }\label{tb:comparative}
\articleonly{\end{figure}}{\end{tuftefigure}}
\renewcommand{\figurename}{Fig.}

We studied these data using the Root-Degree, PageRank, and SpringRank score functions. 
Table \ref{tb:comparative} summarizes our results, including parameter estimates; standard errors (obtained by inverting the numerically-calculated Fisher information matrix); and optimized log-likelihoods for each combination of score and data set. 
Several features stand out. 
In all four data sets and across all three score functions, we find $\hat{\beta}_1 > 0$ and $\hat{\beta}_2 < 0$. 
This suggests a persistent pattern in time-dependent hierarchies: while endorsements do flow upward ($\hat{\beta}_1 > 0$), nodes are more likely to endorse those close to them in rank ($\hat{\beta}_2 < 0$).
Endorsements tend to flow a few rungs up the ladder---not directly to the top.  
The reasons for this pattern likely vary across data set. 
In the Math PhD Exchange, this may indicate that low-ranked schools struggle to recruit graduates of \revtwo{high-ranked ones} due to a limited supply of elite candidates. 
In parakeet populations, proximal aggression may facilitate inference of dominance hierarchies through transitive inference \cite{hobson2015social}. 
In Newcomb's Fraternity, we postulate that implicit social norms may encourage friendships between those of similar standing. Similar results have been reported in static social network data among adolescents~\cite{ball2013friendship}. 
Thus, while we do not attribute this pattern in the parameter estimates to a ``universal'' mechanism, we suggest its persistence as an interesting observation worthy of future study. 

Because different score functions capture distinct qualitative features of the data, quantitative comparisons yield insights into the generating mechanisms at work.
%it is important to be able to compare them quantitatively. 
In general, parameters from models using differing score functions should not be directly compared, since these parameters are sensitive to the scale of the score vector. 
However, we can compare models on the basis of their likelihoods. 
In the Math PhD Exchange, the Root-Degree model was strongly favored over either SpringRank or PageRank.
In the context of this data set, the Root-Degree score is a measure of faculty production: a school that places more candidates has a higher score, regardless of the prestige of the institutions at which the candidates land. 
The strong fit from the Root-Degree score is consistent with previous findings that raw faculty production plays a major role in structuring the hierarchy of academic hiring within computer science, business, and history \cite{clauset2015hierarchy}. 
As \citet{clauset2015hierarchy} note, transitive prestige also plays an important role. 
It would be of significant interest to extend our study to include multiple score functions, enabling an inferential analysis of the relative roles of production and transitive prestige. 

In contrast, the SpringRank score was favored by large margins in both Parakeet data sets and by a smaller margin in the Fraternity data set, suggesting that transitive prestige plays a more prominent role. 
Among Parakeets, it may matter not only how many confrontations one wins but also against whom, with victories over high-ranking birds counting more towards one's own prestige. 
This finding is consistent with those of \citet{hobson2015social}, who found, using different methodology, that parakeet behavior suggests the ability to draw sophisticated, transitive inferences about location in the hierarchy. 
Similarly, in Newcomb's Fraternity, friendships with highly-ranked brothers may confer greater prestige than those with lower-ranked ones. 

In addition to the likelihoods, we can also compare the memory estimate $\hat{\lambda}$ across models and data sets. 
Since the model assumes that the impact of past endorsements decays at rate $\lambda$, the quantity $t_{1/2} = -\log(2)/\log(\hat{\lambda})$ represents the half-life of system information according to the inferred dynamics, in units of observation periods. 
In the Math PhD data, the favored Root-Degree score gave a half-life of $t_{1/2} \approx 5$ years.
In the Parakeets data, the half-life estimated under SpringRank is $t_{1/2} \approx 1.7$ weeks for the first group and $t_{1/2} \approx 0.8$ weeks for the second. 
The small number of observation periods implies that these estimates should be approached with caution. 
Finally, in the Newcomb Fraternity data, the SpringRank half-life was $t_{1/2} \approx 2$ weeks.
This suggests that the friendships in this data set evolved on timescales much shorter than the full semester.
This likely reflects the fact that the brothers did not know each other prior to data collection, requiring them to form their social relationships from scratch.
\rev{
    An important caveat in interpreting these estimated half-lives is that the \emph{indirect} influence of an interaction may extend far beyond its \textit{direct} influence. 
    In the Math PhD data, for instance, while the half-life indicates that only a quarter of hiring events will be directly ``remembered'' in the system after a decade, those events will have influenced ten cycles of hiring, which may further reinforce the patterns established by the earlier events. 
    }

\renewcommand{\figurename}{Table} % Weird hack to avoid all bold caption?
\articleonly{\begin{figure}}{\begin{tuftefigure}}
        \begin{center}
            \rev{

\begin{tabular}{l l l l l}
                        &                 &  Root-Degree     &   PageRank      &  SpringRank    \\ 
    \toprule
    Math PhD            & $\beta_1^c$ &  \textbf{1.36}     &   1.18   &  2.00   \\ 	
    Exchange            & $\hat{\beta}_1$   &  \textbf{1.28}$_*$ (0.02)     &   0.74$_*$ (0.01)   &  2.99* (0.04)   \\ 
    \midrule
    Parakeets (G1)      & $\beta_1^c$ &  0.55      &  1.18     &  \textbf{2.00}    \\ 	
                        & $\hat{\beta}_1$   &  0.84$^*$ (0.05)     &  1.82$^*$ (0.08)    &  \textbf{3.03}$^*$ (0.16)   \\ 
    \midrule 
    Parakeets (G2)        & $\beta_1^c$ &  0.49      &  1.18     &  \textbf{2.00}    \\ 	
                           & $\hat{\beta}_1$ &  0.62$^*$ (0.03)     &  0.82$_*$ (0.04)    &  \textbf{2.86}$^*$ (0.14)   \\ 
    \midrule 
    Newcomb             & $\beta_1^c$ &  0.89      &  1.18    &  \textbf{2.00}    \\ 	
    Fraternity          & $\hat{\beta}_1$ &  0.95 (0.05)     &  1.21 (0.07)    &  \textbf{2.33}$^*$ (0.14)   \\ 
    \midrule
    % & \multicolumn{4}{l}{$^*$All parameters shown are nonzero with $p < 0.01$.} 
\end{tabular}
            }
        \end{center}
    \caption{
        \rev{
            Estimates of $\beta_1$ (identical to those in \Cref{tb:comparative}) compared to the mean critical value $\beta_1^c$ for each system. 
            $\beta_1^c$ is calculated as in \Cref{thm:bifurcations}, using as $m$ the mean number of interactions per time-step in the observed data. 
            As in Table \ref{tb:comparative}, the parameters corresponding to the highest log-likelihood are shown in \textbf{bold}. 
            Estimates shown with an upper asterisk$^*$ exceed the approximate critical value by two standard errors, while estimates shown with a lower asterisk$_*$ are smaller than the approximate critical value by two standard errors. 
        }
    	\rev{
    		See\articleonly{SI Appendix, Fig.~S5}{\cref{fig:trace-inferred-params}} for simulated trajectories using the inferred parameters.
    	}
    }
    \label{tb:criticality}
\articleonly{\end{figure}}{\end{tuftefigure}}
\renewcommand{\figurename}{Fig.}

\rev{
    As described in \Cref{thm:bifurcations}, in the long-memory limit, our model has distinct egalitarian and hierarchical regimes, separated by a critical value $\beta_1^c$. 
    The model's estimate of $\beta_1$ allows us to roughly locate empirical systems within these regimes.
    There are two necessary points of caution. 
    First, when the estimate $\hat{\lambda}$ is far from the idealized long-memory limit, hierarchical and egalitarian regimes may not be sharply distinguished. 
    Second, in the Math PhD and Parakeet data, the number of updates $m$ varies between time steps. Here, a reasonable approximation is to use the average number of updates $\bar{m}$ per time step. 
    Using this average and \Cref{thm:bifurcations}, we computed an approximate long-memory critical value $\beta_1^c$ for each empirical system. 
}

\revtwo{
    Comparing the data-derived preference estimates $\hat{\beta}_1$ to the approximate critical values $\beta_1^c$ reveals that all four empirical systems are in or near the hierarchical regime (Table \ref{tb:criticality}).
}
\rev{
    The Root-Degree estimates of $\beta_1$ tend to be very close to the approximate critical point.
    For the Math PhD data, in which Root-Degree is the preferred model, the estimate  \revtwo{of $\beta_1$} is slightly but statistically-significantly below the critical value.
    In each of the other three data sets, the estimate is slightly above the critical value, \revtwo{and} significantly so in the two Parakeet groups. 
    \revtwo{Given} the presence of a bistable regime in the Root-Degree model (\cref{fig:bifurcations}(a)), the estimate of $\beta_1$ for the Math PhD data is consistent with persistent hierarchy despite the fact that the estimate falls slightly below the critical threshold.
    Indeed, simulations with the inferred parameters produce persistent hierarchical structure similar to that observed in the data (SI Appendix, Fig.~S5).
    The PageRank estimates behave similarly \revtwo{to Root-Degree}, although the finding in Parakeets (G2) is reversed. 
    The presence of a bistable regime in the PageRank model (\cref{fig:bifurcations}(b)) indicates that these findings are \revtwo{consistent} with persistent hierarchy in any of these data sets (see SI Appendix, Fig.~S5 for simulated dynamics). 
    Finally, in the SpringRank model, which obtains the highest likelihood for both Parakeet data sets and the Newcomb Fraternity, the estimated values of $\beta_1$ significantly exceed the estimated critical values, and tend to lie in or near the range $[2,3]$.  
    In summary, all three models suggest that the system \revtwo{corresponding to each data set} is in or near the regime of self-reinforcing hierarchy. 
}

\setcounter{figure}{3}
\articleonly{\begin{figure}}{\begin{tuftefigure}}
\articleonly{
        \includegraphics[width=.49\textwidth]{math_phd_case_study_square.pdf}
    }{
        \centering
        \includegraphics[width=.8\textwidth]{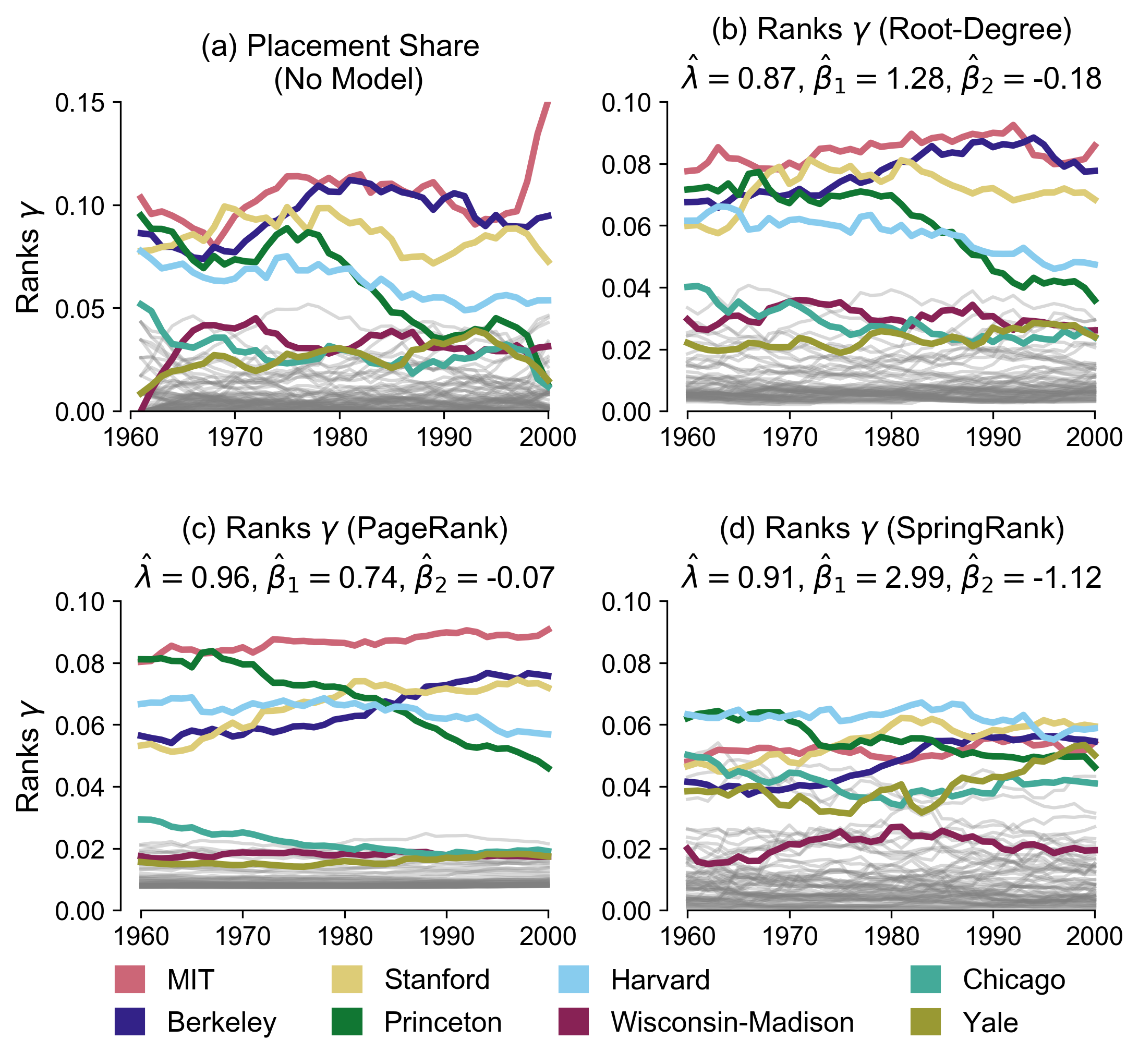}
    }
    \caption{
        Visualization of evolving ranking functions in the Math PhD Exchange.
        (a): Fraction of all placements (number of graduates hired) from each school, shown as a moving average with bin-width 8 years for visualization purposes.
        (b): Inferred rank vector $\vgamma$ as a function of time using the Root-Degree score function. 
        (c-d): As in (b), with PageRank and SpringRank score functions, respectively. 
        Parameters for panels (b-d) are shown in the first section of Table~\ref{tb:comparative}. 
    }
    \label{fig:math_phd_trajectories}
\articleonly{\end{figure}}{\end{tuftefigure}}

Our model also assigns intepretable, time-dependent ranks to empirical data (\cref{fig:math_phd_trajectories}).
\revtwo{For the Math PhD Exchange network, for example}, the raw placement share (\cref{fig:math_phd_trajectories}a) and Root-Degree model (\cref{fig:math_phd_trajectories}b) show strong qualitative agreement, with institutions that place the most candidates occupying higher ranks. 
Due to the relatively large estimates $\hat{\lambda}$, both the Root-Degree  and PageRank models (\cref{fig:traces}b-c) produce smoother rank trajectories than the purely-descriptive placement share with 8-year rolling average. 
In contrast, the SpringRank score generates qualitatively different trajectories that are less sensitive to raw volume (\cref{fig:math_phd_trajectories}d). 
For instance, SpringRank places Harvard at the top over most of the time period, while the other scores prefer MIT. This difference reflects SpringRank's sensitivity to \emph{where} Harvard's graduates were placed, a consideration which Root-Degree entirely ignores. 
Similarly, SpringRank places Chicago and Yale noticeably higher than Wisconsin-Madison, despite all three having similar numbers of placements. 

\section*{Discussion}
\label{sec:discussion}

We have proposed a simple and flexible model of persistent hierarchy as an emergent feature of networked endorsements with feedback. 
When the \revtwo{preference} for high status exceeds a critical value, egalitarian states destabilize and hierarchies emerge. 
The location of this transition depends on the structure of the score function and of the node's preferences. 
Our findings emphasize that winner effects do not require internal, \rev{rank}-enhancing feedback mechanisms. 
Social reinforcement through prestige preference is sufficient to generate social hierarchies. 

Crucially, our model has a tractable likelihood function, supporting principled statistical inference of parameters---for both preferences and memory strength---from empirical data. In the four data sets analyzed, we found that links are typically formed in alignment with the hierarchy ($\hat \beta_1 > 0$) but that they are preferentially created to other nodes with similar ranks ($\hat \beta_2 < 0$). The likelihood also opens the door to model selection to determine relevant score functions. We found that networked ranking methods that capture transferable prestige are preferred over non-networked methods in some but not all systems. Due to its flexibility, our framework can be applied to additional data sets, score functions, and/or preference models to test the generality of these empirical observations.

There are limitations to our approach. 
First, we specified a fixed parametric form for the utilities with \cref{eq:utility_function} and Gumbel-distributed noise with \cref{eq:utility}. 
Other choices may be more justified in particular applications, ideally informed by domain-specific considerations. 
\revtwo{Importantly}, our inferential framework \revtwo{allows for quantitative evaluation and comparison of these choices. Taking advantage of this, future work could} systematically explore the most appropriate functional forms in systems from diverse scientific domains.
Second, \revtwo{our model assumes that} all nodes use identical preference parameters $\beta_1, \beta_2$ and score vector $\vs$ when computing utilities. 
The latter is an especially strong assumption, since it requires each node to have global knowledge of the endorsement network, or at least of the score vector. 
This is unlikely to be true in real systems, and should be regarded as a modeling device.
Future work, along the lines of \citet{hobson2015social}, could explore the interplay between the cognitive capabilities of individuals represented by nodes and the information available to them in the formation of social hierarchies. 

Our model \revtwo{points to} several other avenues for further work. 
A crucial step would be to extend extant network-based models \cite{konig2011network,konig2014nestedness,krause2013spontaneous} so that their parameters could be statistically learned from data. 
This would enable comparative validation of different modeling frameworks. 
Studies of the relationship between measures of time-dependent centralities \cite{taylor2017eigenvector,taylor2019supracentrality,liao2017ranking} and dynamic models of hierarchy would also be valuable. 
In particular, the theory of time-dependent centralities \revtwo{faces an important methodological issue}: 
different reasonable ranking methods can yield directionally different orderings of nodes when applied to the same data set \cite{mariani2020network}. 
Their performance on external validation tasks, such as the prediction of central nodes in spreading processes \cite{lu2016vital}, may also vary significantly. 
Because the theories of centrality and generative networks have evolved largely separately, \revtwo{evaluating} the suitability of a centrality metric for a given dynamic system can be difficult. 
Our inferential approach offers \revtwo{a candidate validation task to overcome this challenge}:
good \revtwo{centrality metrics} are those which most effectively predict the future evolution of the system. 
This approach enables us to not only compare different score and utility functions in a principled manner but also explore their relative importance in observed networks. 
For instance, one could study the relative influence of degree-based and SpringRank scores by incorporating both into our model and then analyzing their distinct coefficients. \revtwo{Further work in this direction could reveal} how different forms of centrality combine to govern the evolution of interaction networks. 
We anticipate that \revtwo{a} fruitful dialogue between centrality theory and generative models of time-varying networks will deepen our understanding of the feedback mechanism between local interactions and hierarchical structures.

\section{Published Article}

    This article was published in the \href{https://www.pnas.org/content/118/16/e2015188118}{\emph{Proceedings of the National Academy of Sciences}}. 
    We request that all citations refer to the published version. 

\section*{Software}
A repository containing our data, model implementation, and figure generation scripts is available at 
{\url{https://github.com/PhilChodrow/prestige_reinforcement}}.
\articleonly{}{\section*{Acknowledgments}}
\revtwo{We thank the anonymous reviewers for their constructive feedback.}
We \revtwo{also thank} Dakota S. Murray (Indiana University, Bloomington), Kate Wootton (Swedish University of Agricultural Sciences; \rev{University of Colorado Boulder}), VPS Ritwika (University of California, Merced), and Rodrigo Migueles Ram\'irez (McGill University) for extremely helpful discussions during the early phase of this work. 
We are also grateful to the organizers of the Complex Networks Winter Workshop in Quebec City at which this work was conceived, and to the inhabitants of Caf\'e F\'elin Ma Langue Aux Chats, Quebec City, Canada for contributing to an exceptionally productive working environment. 
\rev{MK acknowledges support from the Army Research Office Grant W911NF-18-1-0325.}
PSC acknowledges support from the National Science Foundation Award 1122374. DBL acknowledges support from the National Science Foundation Award SMA 1633791 and the Air Force Office of Scientific Research Award FA9550-19-1-0329.
\section*{Gender Representation in Cited Work}

    Recent work in several fields of science has identified gender bias in citation practices: papers by women and other minoritized groups are systematically under-cited in their fields \cite{mitchell2013gendered,dion2018gendered,caplar2017quantitative, maliniak2013gender, Dworkin2020.01.03.894378}.
    Gender bias can arise through explicit and implicit bias against a person's known gender identity, or against a name commonly used by members of a marginalized gender identity \cite{macnell2015s,paludi,moss2012science}.
    In this work, we proactively sought to include relevant citations by non-male authors. 

    We manually gender-coded the authors in the works cited according to personal acquaintance, instances of pronoun usage online, or first name. 
    This method is limited: names and pronouns may not be indicative of gender; gender may change over time; and manual coding is inherently subjective and subject to error. 
    We focused on the first and last authors because typically, though not always, the former is the leading researcher and the latter the senior author in the disciplines included in our references.
    Of the works cited in the main text, 29\% had a non-male first author and 13\% had a non-male last author. 
    Of those with at least two authors,
    38\% had either a non-male first author or a non-male last author. 
    % Including the works cited in this statement, 32\% had a non-male first author and 28\% had a non-male last author. 
    % Of those with at least two authors,
    % 54\% had either a non-male first author or a non-male last author. 

    This statement is modeled on those found in \citet{torres2020and} and \citet{Dworkin2020.01.03.894378}. 
    We join those authors and many others in calling for collective effort to promote equitable practices in science.

\pagebreak
\begin{fullwidth}
    \bibliographystyle{dgleich-bib}
    \bibliography{preprint_refs}
\end{fullwidth}

\pagebreak

\begin{flushright}%
    \textbf{\MakeTextUppercase{\allcapsspacing{Supplementary Information}}} \\ 
    \textit{Emergence of Hierarchy in Networked Endorsement Dynamics}
    \bigskip\par%
\end{flushright}\noindent%

\appendix

%!TEX root = ../preprint.tex

\section{Linear Stability} \label{sec:bifurcations_proof}
In this section, we prove a set of linear stability results that generalize Theorem 1 in the main text. 
Our generalizations account for (a) nonlinear features and (b) multiple updates per round.  

Throughout this section, we consider a utility function of the form 
\begin{align}
	u_{ij}(\vs) = \sum_{\ell = 1}^k \beta_\ell\phi_{ij}^{\ell}(\vs) \;,
	\label{eq:utility_general}
\end{align}
where each $\phi^{\ell}:\mathbb{R}^n \mapsto \mathbb{R}^{n\times n}$ is a smooth \emph{feature map}; $\beta_\ell \in \mathbb{R}$ is a \textit{preference parameter} indicating relative importance of the $\ell$th feature; and $\phi_{ij}^{\ell}(\vs)$ is the $ij$th entry of $\phi^{\ell}(\vs)$.  
We collect the parameters $\beta$ in a vector $\vbeta \in \R^k$.
The utility function in\articleonly{~Eq.~(4)}{\cref{eq:utility_function}} from the main text is a special case with linear feature map $\phi^{1}_{ij} (\vs) = s_j$, and quadratic feature map, $\phi^{2}_{ij} (\vs) = (s_i - s_j)^2$. 
We also define the \textit{rate matrix} $\mG = [n^{-1} p_{ij}]$, whose $(i,j)th$ entry gives the probability that, in a given time step, node $i$ chosen uniformly at random endorses node $j$ (see\articleonly{~Eq.~(5)}{\cref{eq:utility}} in the main text for the definition of $p_{ij}$). 

Since we aim to characterize the linear stability of egalitarian fixed points, we will consider the Jacobian of the rank vector $\vgamma$ evaluated at egalitarian fixed points. 
We will therefore evaluate the Jacobian at $\vs_0 = \theta \ve$, where $\theta \in \R$. 
By definition, $\vgamma = n^{-1}\mG^T\ve = n^{-1}\sum_i \vgamma_i$, where $\vgamma_i$ is the $i$th column of $\mG$.  
Differentiating and applying the chain rule, we have 
\begin{align*}
    \dds{\gamma(\vs_0)} &= \sum_{i}\left(\mGamma_i - \vgamma_i\vgamma_i^T\right)\sum_{\ell = 1}^k\beta_\ell \dds{\phi_{i}^{\ell}}\;,
\end{align*}
where $\mGamma_i = \diag \gamma_i$ and  $\phi_{i\cdot}^{\ell}(\vs_0)$ is the $i$th row of the $\ell$th feature map evaluated at $\vs_0$. 
At $\vs_0 = \theta\ve$, $\mG = n^{-1}\mE$. 
It follows that $\vgamma_i = n^{-1}\ve$ and $\mGamma_i = n^{-1}\mI$.
We thus have 
\begin{align}
    \dds{\vgamma(\vs_0)} = n^{-1}(\mI - n^{-1}\mE)\sum_{i = 1}^n\sum_{\ell = 1}^k\beta_\ell \dds{\phi_{i\cdot}^{\ell}(\vs_0)} \triangleq \mM(\vs_0; \vbeta)\;. \label{eq:dgamma}
\end{align} 
We will express our primary results in terms of this matrix. 

When writing proofs involving dynamics, we will typically repress the time-argument of quantities like $\vs$ and $\mA$. 
When time step $t$ is implied, we will use the somewhat informal notation $\delta \vs = s(t+1) - \vs(t)$ and $\delta\mA = \mA(t+1) - \mA(t)$ to denote the increments of these and other quantities in the current time step. 

\subsection{Degree Scores}

\begin{thm}[Stable Egalitarianism with Degree Scores] \label{thm:degree_stability}
    When $\sigma(\mA) = \vs = \mA^T\ve$, the vector $\vs_0 = d\ve$ is a root of $\vf$, where $d = \frac{m}{n}$, and is the only egalitarian root.
    Furthermore, $\vs_0$ is linearly stable in the long-memory limit if and only if $\mM(\vs_0;\vbeta)$ has eigenvalues strictly smaller than $\frac{1}{m}$. 
\end{thm}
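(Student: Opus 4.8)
The plan is to exploit the linearity of the degree score $\sigma(\mA)=\mA^T\ve$ to obtain a closed form for $\vf$, from which both the root and its linearization drop out. Writing $\vs=\mA^T\ve$, linearity gives $\sigma(\lambda\mA+(1-\lambda)\mDelta)=\lambda\vs+(1-\lambda)\mDelta^T\ve$, so the difference quotient in \cref{eq:limit_f} equals $\E[\mDelta^T\ve]-\vs$ for \emph{every} $\lambda$; hence $\vf(\vs,\mA)=\E[\mDelta^T\ve]-\vs$, depending on $\mA$ only through $\vs$. The $j$th entry of $\mDelta^T\ve$ counts the endorsements received by $j$ in the step, and since each of the $m$ endorsements independently picks an endorser uniformly and then endorses $j$ with probability $p_{ij}(\vs)$ — all $m$ draws using the same score $\vs$ — we get $\E[\mDelta^T\ve]_j = m\,n^{-1}\sum_i p_{ij}(\vs)=m\gamma_j(\vs)$. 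Therefore $\vf(\vs)=m\vgamma(\vs)-\vs$.

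The egalitarian root now reads off immediately: for $\vs_0=\theta\ve$ the computation preceding \cref{eq:dgamma} gives $\mG=n^{-1}\mE$, so $\vgamma(\theta\ve)=n^{-1}\ve$ and $\vf(\theta\ve)=(m/n)\ve-\theta\ve$. This vanishes exactly when $\theta=m/n=d$, which both exhibits $\vs_0=d\ve$ as a root and shows it is the unique root proportional to $\ve$.

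For the stability claim I would linearize. From $\vf(\vs)=m\vgamma(\vs)-\vs$ and \cref{eq:dgamma}, $\dds{\vf(\vs_0)}=m\,\mM(\vs_0;\vbeta)-\mI$. Linear stability in the long-memory limit means this Jacobian is Hurwitz (all eigenvalues with negative real part), the same criterion used for \Cref{thm:bifurcations}. Its eigenvalues are $m\mu-1$ as $\mu$ runs over the eigenvalues of $\mM(\vs_0;\vbeta)$, so Hurwitz-ness is equivalent to $\operatorname{Re}(\mu)<1/m$ for all such $\mu$; since here $\mM(\vs_0;\vbeta)$ is the symmetric positive semidefinite matrix $n^{-1}(\mI-n^{-1}\mE)$ times the symmetric matrix $\sum_i\sum_\ell\beta_\ell\dds{\phi_{i\cdot}^{\ell}(\vs_0)}$ (the feature-map Jacobians being symmetric at the egalitarian state, as one checks directly for the feature maps in \cref{eq:utility_function}), its spectrum is real, and the condition becomes: every eigenvalue of $\mM(\vs_0;\vbeta)$ is strictly smaller than $1/m$.

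To make the phrase "long-memory limit" rigorous and tie it to the actual process, I would also observe that for degree scores the expected one-step map is \emph{exactly} $\vs\mapsto\lambda\vs+(1-\lambda)m\vgamma(\vs)$, whose Jacobian at $\vs_0$ is $\lambda\mI+(1-\lambda)m\mM(\vs_0;\vbeta)$ with eigenvalues $\lambda+(1-\lambda)m\mu$; from
\begin{align*}
\bigl|\lambda+(1-\lambda)m\mu\bigr|^2-1=(1-\lambda)\Bigl[\,2\lambda m\operatorname{Re}(\mu)+(1-\lambda)m^2|\mu|^2-(1+\lambda)\,\Bigr]\;,
\end{align*}
whose bracket tends to $2\bigl(m\operatorname{Re}(\mu)-1\bigr)$ as $\lambda\to1$, the discrete Jacobian has spectral radius below $1$ for $\lambda$ near $1$ precisely when $\operatorname{Re}(\mu)<1/m$ for all eigenvalues $\mu$, recovering the criterion above. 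The genuinely delicate points are the counting argument that yields $\E[\mDelta^T\ve]=m\vgamma(\vs)$ — in particular that the score is frozen across all $m$ draws within a step — and verifying that $\mM(\vs_0;\vbeta)$ has real spectrum so that the phrasing "eigenvalues strictly smaller than $1/m$" is exactly right; the remainder is bookkeeping.
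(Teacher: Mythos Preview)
Your approach is essentially identical to the paper's: derive $\vf(\vs)=m\vgamma(\vs)-\vs$ from the linearity of the degree score, read off the unique egalitarian root $\vs_0=(m/n)\ve$, and linearize to obtain $m\mM(\vs_0;\vbeta)-\mI$. Your added care about the reality of the spectrum of $\mM(\vs_0;\vbeta)$ and the discrete-time cross-check go slightly beyond what the paper records, but the core argument matches line for line.
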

\begin{proof}
    We first derive the functional form of $\vf$. 
    We can write
    \begin{align*}
        \E[\vs\tplusone|\mA\t] &= \E[\mA\tplusone|\mA\t]^T\ve \\
                               &= \lambda\mA\t\ve + (1-\lambda)\E[\mDelta\t]^T\ve \\ 
                               &= \lambda\mA\t\ve + (1-\lambda)mn^{-1}\mG\t^T\ve\;.
    \end{align*}
    Inserting this expression into \eqref{eq:limit_f}, and recognizing $n^{-1}\mG\t\ve = \vgamma\t$, we have 
    \begin{align*}
        \vf(\vs) = mn^{-1}\E[\mG]\ve-\mA\t\ve = m\vgamma-\vs\;.
    \end{align*}
    We can now check that $\vs_0$ is indeed the unique egalitarian root of $\vf$. 
    Suppose that $\vs = s\ve$ for some scalar $s$.  
    Then, 
    \begin{align*}
        \vf(\vs) = m\vgamma(\vs) - \vs = (mn^{-1} - s)\ve\;,
    \end{align*}
    which is only equal to zero when $s = \frac{m}{n}$, as needed. 

    Now computing derivatives, we have 
    \begin{align*}
        \dds{\vf(\vs)} = m\mM(\vs;\vbeta) - \mI\;.
    \end{align*}
    This matrix has strictly negative eigenvalues provided that the eigenvalues of $\mM(\vs_\vzero;\vbeta)$ are strictly smaller than $\frac{1}{m}$, completing the proof. 
\end{proof}
\begin{cor}
    Using the Root-Degree score function, $\vs_0 = \frac{m}{n}\ve$ is a linearly stable fixed point of $\vf$ if and only if $\beta < 2\sqrt{\frac{n}{m}}$. 
\end{cor}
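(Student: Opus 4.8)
The plan is to obtain this as a genuine corollary of \Cref{thm:degree_stability} by viewing the Root-Degree model as a smooth reparametrization of a degree-score model. Write $\vd = \mA^T\ve$ for the weighted in-degree, so that the Root-Degree score is $\vs = \sqrt{\vd}$ (entrywise) and the prestige utility $u_{ij}(\vs) = \beta s_j$ reads, as a function of $\vd$, simply $\tilde u_{ij}(\vd) = \beta\sqrt{d_j}$. Since $p_{ij}$ enters the dynamics only through these utilities, the Root-Degree model and the degree-score model with the single feature map $\phi^1_{ij}(\vd) = \sqrt{d_j}$ (and $\beta_1 = \beta$, $\beta_2 = 0$) drive \emph{identical} dynamics on $\mA$; they differ only in which derived vector---$\sqrt{\vd}$ or $\vd$---is reported by $\vf$. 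As $\vd\mapsto\sqrt{\vd}$ is a diffeomorphism near the equilibrium in-degree $d_i = m/n > 0$ and the one-step increment $\vd\tplusone - \vd\t$ is $O(1-\lambda)$, the map $\vf$ transforms as a vector field under this change of coordinates: the correction coming from the nonlinearity of the square root is $O((1-\lambda)^2)$ and disappears in the $\lambda\to 1$ limit that defines $\vf$. Hence the Jacobian of $\vf$ at the egalitarian root is merely conjugated---not altered---by the coordinate change, so linear stability is the same question in both models.

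I would then apply \Cref{thm:degree_stability} to the degree-score model with utility $\tilde u_{ij}(\vd) = \beta\sqrt{d_j}$: its unique egalitarian root is the equilibrium in-degree $\vd_0 = \tfrac{m}{n}\ve$ (so the Root-Degree score at egalitarianism has every entry equal to $\sqrt{m/n}$), and it is linearly stable precisely when every eigenvalue of $\mM(\vd_0;\vbeta)$ is strictly below $1/m$. The remaining work is to compute $\mM(\vd_0;\vbeta)$. The only feature row is $\phi^1_{i\cdot}(\vd) = (\sqrt{d_1},\dots,\sqrt{d_n})$, whose Jacobian has $(j,k)$-entry $\tfrac12 d_j^{-1/2}\delta_{jk}$; at $\vd_0$ this equals $\tfrac12\sqrt{n/m}\,\mI$, the same for every $i$. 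Because this is a scalar multiple of $\mI$ and is independent of $i$, \eqref{eq:dgamma} forces $\mM(\vd_0;\vbeta)$ to be a scalar multiple of the orthogonal projection $\mI - n^{-1}\mE$ onto $\ve^\perp$; carrying the $n$-dependent normalizations through (including the sum over the $n$ candidate endorsers) gives $\mM(\vd_0;\vbeta) = c\,(\mI - n^{-1}\mE)$ with $c = \beta/(2\sqrt{mn})$.

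The spectral step is then immediate: $\mM(\vd_0;\vbeta)$ has eigenvalue $0$ along $\ve$---the egalitarian direction, which is never the unstable one---and eigenvalue $\beta/(2\sqrt{mn})$ with multiplicity $n-1$, so ``all eigenvalues $< 1/m$'' becomes $\beta/(2\sqrt{mn}) < 1/m$, i.e.\ $\beta < 2\sqrt{n/m}$ (and for $\beta\le 0$ the nonzero eigenvalue is nonpositive, so the egalitarian state is stable, matching the stated equivalence). The one genuinely delicate point is the coordinate-change step---confirming that linear stability of the egalitarian root of $\vf$ is invariant under the nonlinear substitution $\vs = \sqrt{\vd}$; one either justifies it carefully as above or sidesteps it by computing $\vf$ for the Root-Degree score directly, since a first-order expansion of $\sqrt{\cdot}$ in $1-\lambda$ gives $f_i(\vs) = \tfrac{1}{2s_i}\bigl(m\gamma_i(\vs) - s_i^2\bigr)$, whose egalitarian root has entries $\sqrt{m/n}$ and whose Jacobian there equals $\tfrac{m}{2s_0}\mM(\vs_0;\vbeta) - \mI$ with $\mM(\vs_0;\vbeta)$ again a scalar multiple of $\mI - n^{-1}\mE$; this returns the same threshold. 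In either route the linear algebra is trivial and the substance is careful bookkeeping of the square root and the normalizations---a computation to execute with care rather than a conceptual obstacle.
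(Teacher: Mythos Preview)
Your proposal is correct and follows essentially the same route as the paper: both absorb the square root into the feature map, so that the score becomes the in-degree and \Cref{thm:degree_stability} applies directly, then compute $\mM(\vd_0;\vbeta)$ as a multiple of $\mI - n^{-1}\mE$ and read off the eigenvalue condition. The paper simply declares this reparametrization ``convenient'' without further comment; you are more careful, supplying the diffeomorphism argument to justify that linear stability is unchanged and offering the direct computation of $\vf$ for the Root-Degree score as a cross-check---both are sound but not strictly needed, since once the square root is treated as part of the feature the model \emph{is} a degree-score model and \Cref{thm:degree_stability} applies verbatim.
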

\begin{proof}
    It is convenient to treat the operation of taking the square root as part of the feature map, rather than part of the score function. 
    We therefore suppose that $s_j$ is the in-degree of node $j$ and that  $\phi_{j}(\vs) = \sqrt{s_j}$. 
    Computing from \eqref{eq:dgamma}, we obtain 
    \begin{align*}
        \mM(\vs_0;\vbeta) = \frac{1}{2}\frac{n^{-1}}{\sqrt{d}}\beta(\mI - n^{-1}\mE)\;.
    \end{align*}
    This matrix again has a zero eigenvalue associated with the direction $\ve$. 
    For any direction $\vv \perp \ve$, there is an eigenvalue $\frac{1}{2}\frac{n^{-1}}{\sqrt{d}}\beta$. 
    From \Cref{thm:degree_stability},  $\vs_0$ will be linearly stable provided that 
    \begin{align*}
        \frac{1}{m}> \frac{1}{2}\frac{n^{-1}}{\sqrt{d}}\beta\;.
    \end{align*}
    or 
    \begin{align*}
        \beta < 2\sqrt{d}\frac{n}{m} = 2\sqrt{\frac{n}{m}}\;,
    \end{align*}
    as required. 
\end{proof}

\subsection{PageRank Scores}

The PageRank score~\cite{brin1998anatomy,page1999pagerank} is the solution $\vs$ of the linear system 
\begin{align}
    \left[\alpha \mA^T(\mD^o)^{-1} + (1-\alpha)n^{-1}\mE\right]\vs = \vs\;, \label{eq:PageRank}
\end{align}
where $\mD^o = \diag(\mA\ve)$. 
The Perron-Frobenius Theorem \cite{horn2012matrix} ensures that $\vs$ is strictly positive entrywise.
We assume $\vs$ to be normalized so that $\vs^T\ve = n$, which is contrary to the usual normalization $\vs^T\ve = 1$. 
This choice amounts to a rescaling of the parameters $\vbeta$, and does not otherwise impact the analysis. 

In the case of PageRank, it is difficult to derive a result for general features and we therefore work directly with the PageRank model with linear features. 
\begin{thm}
    The vector $\vs_0 = \ve$ is the unique egalitarian root of $\vf$ 
    under PageRank scores. 
    In the PageRank-Linear model, the egalitarian root is linearly stable if and only if $\beta < \frac{1}{\alpha}$. 
\end{thm}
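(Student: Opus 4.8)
\noindent The plan is to follow the template of \Cref{thm:degree_stability}: obtain a closed form for $\vf$, read off the egalitarian root from the algebraic structure of the PageRank operator, and then diagonalize the linearization. The only genuinely new element is that $\sigma$ is defined implicitly, so the first task is to differentiate the PageRank eigenvector equation.

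Write $\mP(\mA) = \alpha\mA^T(\mD^o)^{-1} + (1-\alpha)n^{-1}\mE$, so that $\vs = \sigma(\mA)$ solves $\mP(\mA)\vs = \vs$ with $\ve^T\vs = n$. Perturbing $\mA \mapsto \mA + \epsilon\mH$ and differentiating the eigenvector equation gives $(\mI - \mP(\mA))\vs' = \left(D\mP(\mA)[\mH]\right)\vs$ together with $\ve^T\vs' = 0$. Since $\mP(\mA)$ is column-stochastic, $\ve^T(\mI - \mP(\mA)) = \vzero$, so the right-hand side lies in the range of $\mI - \mP(\mA)$, and primitivity (from the Perron--Frobenius argument already invoked) makes $\vs'$ the unique solution obtained by the group inverse of $\mI - \mP(\mA)$ acting on $\ve^\perp$. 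Substituting $\mH = \E[\mDelta] - \mA$, with $\E[\mDelta]_{ij} = (m/n)p_{ij}(\vs)$, and using that $\vs'$ is linear in $\mH$, yields the closed form for $\vf$.

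For the egalitarian root, take $\vs = c\ve$. With the linear feature every utility $u_{ij}(\vs) = \beta s_j$ is equal, so $p_{ij} = 1/n$ and $\E[\mDelta] = (m/n^2)\mE$; the associated fixed point of the $\mA$-dynamics is then forced to be $\mA_0 = (m/n^2)\mE$, for which $\mP(\mA_0) = n^{-1}\mE$ and hence, after the normalization $\ve^T\vs = n$, $\sigma(\mA_0) = \ve$, i.e.\ $c = 1$. Since $\mH = \E[\mDelta] - \mA_0 = \vzero$ at this state, $\vf = \vzero$; and $\ve$ is the only egalitarian vector consistent with the normalization, giving uniqueness.

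For stability, evaluate the two maps entering the linearization at $\mA_0$. A short computation gives the PageRank sensitivity $D\sigma(\mA_0)[\mH] = \frac{\alpha n}{m}(\mI - n^{-1}\mE)\mH^T\ve$, the essential feature being the factor $\alpha$ inherited from the teleportation term. On the other side, $D(\E[\mDelta])(\ve)[\delta\vs] = \frac{m\beta}{n^2}\ve\,\delta\vs^T$ for $\delta\vs \perp \ve$, which encodes the multinomial-logit Jacobian $\mM(\vs_0;\vbeta)$ of \eqref{eq:dgamma}, whose only nonzero eigenvalue on $\ve^\perp$ is $\beta$ in the linear case. Composing these two maps on the $(n{-}1)$-dimensional space of score perturbations orthogonal to $\ve$, all $n$- and $m$-factors cancel and the composite has the single eigenvalue $\alpha\beta$; the remaining perturbation directions are changes in $\mA$ that do not move the score and contract at rate $\lambda$. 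Hence the Jacobian of $\vf$ at the egalitarian root has eigenvalues $-1$ and $\alpha\beta - 1$, and the root is linearly stable exactly when $\alpha\beta - 1 < 0$, i.e.\ $\beta < 1/\alpha$. I expect the implicit-differentiation step to be the main obstacle: handling the singular system $(\mI - \mP)\vs' = (\,\cdot\,)$ cleanly through the group inverse / Perron projection, and then tracking the resulting rank-one structure $\ve\,\vv^T$ through the linearization so that the dimensional factors cancel as claimed.
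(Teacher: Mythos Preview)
Your proposal is correct and lands on the same eigenvalues $-1$ and $\alpha\beta-1$ as the paper, via the same underlying mechanism (implicit differentiation of the PageRank fixed-point equation). The organization differs: the paper keeps the perturbed eigenvector equation $\left[\mI - \alpha m^{-1} n\mA^T - (1-\alpha)n^{-1}\mE\right]\delta\vs = \alpha m^{-1} n(\delta\mA^T)\vs$ intact, differentiates it in $\vs$, and only at the very end inverts the prefactor matrix $\mI - \alpha m^{-1}n^{-1}\mE$ to read off $\mJ(\vs_0)$. You instead factor the linearization as a chain rule, computing $D\sigma(\mA_0)$ (via the group inverse acting on $\ve^\perp$) and $D(\E[\mDelta])(\vs_0)$ separately and composing; the $-\mI$ in the Jacobian then appears because $D\sigma(\mA_0)[\delta\mA]=\delta\vs$ by definition. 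Your decomposition is somewhat more transparent about where the factor $\alpha$ originates (the teleportation term in $D\sigma$) and naturally isolates the ``silent'' $\mA$-directions that do not move $\vs$; the paper's version instead exploits the fixed-point simplification $\mD^o=\text{const}$ up front, which drops the out-degree derivative you carry but which only contributes along $\ve$ and so does not affect the final spectrum. Either packaging yields the criterion $\beta<1/\alpha$.
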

\begin{proof}
    Uniqueness is a direct consequence of normalization: if $\vs = \theta \ve$ and $\ve^T\vs = n$, then we must have $\theta = 1$. 

    We next obtain a necessary condition describing roots of $\vf$. 
    We start with a useful simplification. 
    At any fixed point of $\vf$, we must have $\mD^o = m\mI$. 
    This is because, at any such fixed point, we must have $\mA = m\mG$, and $n\mG$ is row-stochastic. 
    For the purposes of analysis in the long-memory limit, we can therefore consider $\vs$ to be defined by the simplified equation
    \begin{align}
        \left[\alpha m^{-1}n\mA^T + (1-\alpha) n^{-1}\mE \right]\vs = \vs\;. \label{eq:PageRank_simple}
    \end{align}
    In the next time step, we will have 
    \begin{align*}
        \left[\alpha m^{-1}n (\mA^T + \delta \mA^T) + (1-\alpha)n^{-1}\mE\right](\vs + \delta \vs) = \vs + \delta \vs\;.
    \end{align*}
    Expanding and canceling yields
    \begin{align*}
        \left[\alpha m^{-1}n\mA^T + (1-\alpha)n^{-1}\mE\right]\delta \vs + \alpha m^{-1}n(\delta \mA^T)\vs + o(1-\lambda)= \delta \vs\;.
    \end{align*}
    The term $o(1-\lambda)$ includes terms involving the product $(\delta \mA^T)( \delta\vs)$, and relies on the fact that $\delta \vs$ is a smooth function of $\mA$. 
    Rearranging and dropping the asymptotic term, we obtain, in the long memory limit,
    \begin{align}
        \left[ \mI - \alpha m^{-1}n\mA^T - (1-\alpha)n^{-1}\mE \right]\delta \vs = \alpha m^{-1}n(\delta \mA^T)\vs\;. \label{eq:PR_intermediate}
    \end{align}
    This expression gives an implicit representation of $\vf$ via the relation $\vf(\vs, \mA) = \lim_{\lambda \rightarrow 1} \frac{\E[\delta \vs]}{1-\lambda}$. 
    We can therefore enforce $\vf(\vs, \mA) = \vzero$ by setting $\E[\delta \vs] = \vzero$, obtaining the  necessary condition $\E[\delta \mA^T]\vs = \vzero$ for roots of $\vf$. 
    Expanding this condition yields, 
    \begin{align*}
        \vzero = \E[\delta \mA^T]\vs = (1-\lambda)(\mG^T - \mA^T)\vs \;.
    \end{align*}
    Inserting \eqref{eq:PageRank_simple} and rearranging yields the nonlinear system 
    \begin{align}
        \left[\mG^T + \alpha^{-1}(1-\alpha)n^{-2}\mE\right]\vs = \alpha^{-1}n^{-1}\vs\;. \label{eq:PageRank_equilibria}
    \end{align}
    The largest eigenvalue of the matrix on the lefthand side is $\alpha^{-1}n^{-1}$. 
    This allows us to numerically solve \eqref{eq:PageRank_equilibria} iteratively, by alternating between solving for $\vs$ via a standard eigenvalue solver and updating $\mG$ with the new value of $\vs$.
    This is the method implemented in the accompanying software and used to generate equilibria in Fig.~3.  
    
    In order to derive the linear stability criterion, we  divide both sides of \eqref{eq:PR_intermediate} by $ 1-\lambda$ and differentiate with respect to $\vs$, obtaining
    \begin{align*}
        \left[ \mI - \alpha m^{-1} n\mA^T - (1-\alpha)n^{-1}\mE \right]\mJ(\vs) = \alpha m^{-1} n \dds{}\left[\mG^T\vs - \mA^T\vs\right]\;.
    \end{align*} 
    % ####
    % TODO: NEED TO TRACK $m^{-1}$ FROM THIS POINT, THE SUBSTITUTION IS SLIGHTLY DIFFERENT
    % ####
    After inserting \eqref{eq:PageRank_simple} and simplifying, we have 
    \begin{align*}
        \left[ \mI - \alpha m^{-1} n\mA^T - (1-\alpha)n^{-1}\mE \right]\mJ(\vs) &= \alpha m^{-1} n\dds{}\left[\mG^T\vs - \alpha^{-1}mn^{-1}\vs + \alpha^{-1}(1-\alpha)mn^{-2}\mE\vs \right] \\ 
        &= \alpha m^{-1} n\dds{}\left[\mG^T\vs - \alpha^{-1}mn^{-1}\vs \right]\;.
    \end{align*}
    The second line follows from the normalization of $\vs$, which implies that $\mE\vs=n\ve$, a constant vector which does not depend on $\vs$. 
    Differentiating the righthand side then yields  
    \begin{align*}
        \left[ \mI - \alpha m^{-1} n\mA^T - (1-\alpha)n^{-1}\mE \right]\mJ(\vs) &= \alpha m^{-1} n \left[\mG^T + (\ve^T\vs)mn^{-1}\dds{\vgamma}\right] - \mI \\ 
        &= \alpha m^{-1} n \left[\mG^T + m\dds{\vgamma}\right] - \mI\;. 
    \end{align*}
    Evaluated at the egalitarian solution $\vs_0 = \ve$, this becomes
    \begin{align*}
        \left[ \mI - \alpha m^{-1} n\mA^T - (1-\alpha)n^{-1}\mE \right]\mJ(\vs_0) = \alpha m^{-1}n^{-1} \mE + \alpha \mM(\vs_0;\vbeta) - \mI\;. 
    \end{align*}

    To complete the argument, we note that, at the egalitarian solution of our model dynamics, $\mA = n^{-2}\mE$. 
    Inserting and simplifying, we have 
    \begin{align*}
        \left[ \mI - \alpha m^{-1} n^{-1}\mE \right]\mJ(\vs_0) = \alpha n^{-1} m^{-1}\mE + \alpha n \mM(\vs_0;\vbeta) - \mI\;. 
    \end{align*}
    Provided that $\alpha < 1$, the premultiplying matrix on the lefthand side is invertible, and $\left[ \mI - \alpha m^{-1} n^{-1}\mE \right]^{-1} = \mI + \alpha(m-\alpha)^{-1}n^{-1}\mE$. 
    This matrix has a single eigenvalue $1 + \alpha(m-\alpha)^{-1}$ with eigenvector $\ve$, and additional eigenvalues equal to unity in orthogonal directions. 
    We then have 
    \begin{align*}
        \mJ(\vs_0) = \alpha m^{-1}(1+\alpha(m-\alpha)^{-1})\mE +  \alpha n\left[\mI + \alpha(m-\alpha)^{-1}n^{-1}\mE \right]\mM(\vs_0;\vbeta) - \mI\;.
    \end{align*}
    In the PageRank-Linear model, $\mM(\vs_0;\vbeta) = \beta n^{-1}(\mI -n^{-1}\mE)$, and we therefore have 
    \begin{align*}
        \mJ(\vs_0) = \alpha m^{-1}(1+\alpha(m-\alpha)^{-1})\mE + \alpha\beta  \left[\mI + \alpha(m-\alpha)^{-1}n^{-1}\mE \right](\mI -n^{-1}\mE) - \mI\;.
    \end{align*}
    We can now read off the eigenvalues of $\mJ(\vs_0)$ analytically. 
    The eigenvector $\ve$ has eigenvalue $-1$, while any vector orthogonal to $\ve$ has eigenvalue $\alpha \beta - 1$.
    This latter eigenvalue is strictly negative if and only if $\beta < \frac{1}{\alpha}$, as was to be shown. 
\end{proof}

\subsection{SpringRank Scores}

We return to the general formalism of score functions and features introduced at the beginning of this section. 

A SpringRank vector $\vs$ for a matrix $\mA$ with regularization $\alpha\in \R$ is a solution to the linear system 
    \begin{align}
        \left[\mD^i + \mD^o - (\mA + \mA^T) + \alpha \mI\right]\vs = \vd^i - \vd^o.  %\label{eq:springrank}
    \end{align} 
    where, $\vd^i = \ve^T\mA$, $\vd^o = \mA^T\ve$, $\mD^i = \diag(\vd^i)$, and $\mD^o = \diag(\vd^o)$.  
   When $\alpha > 0$, \eqref{eq:springrank} is invertible and $\vs$ is therefore unique. 
    Thus, throughout this section we will assume that $\alpha > 0$, and correspondingly refer to $\vs$ as ``the'' SpringRank vector of $\mA$. 
    It is convenient to define $\mL_\alpha = \mD^i + \mD^o - (\mA + \mA^T) + \alpha \mI$ and $\mLambda = \mD^i - \mD^o$, in which case the SpringRank relation reads $\mL_\alpha\vs = \mLambda \ve$. 
    % \mari{Note that I didn't define $\diag(\vd^o)$ and $\diag(\vd^i)$ in the main text. I replaced $\mE$ with $\ve\ve_{n_1}^T$, but can switch back as well, no strong preference}

\begin{thm}[Stable Egalitarianism with SpringRank Scores] \label{thm:springrank_stability}
    When $\sigma$ is the SpringRank map, the vector $\vs_0 = \vzero$ is a fixed point of $\vf$, and is the only egalitarian fixed point of the dynamics. 
    This fixed point is linearly stable in the long-memory limit if and only if the matrix
    \begin{align*}
        \mM(\vzero;\vbeta) - 2n^{-1}(\mI - n^{-1}\mE)
    \end{align*}
    has eigenvalues strictly smaller than $\frac{\alpha n}{m}$.
\end{thm}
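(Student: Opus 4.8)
The plan is to follow the template already used for the degree and PageRank scores: perturb the linear system defining SpringRank to obtain an implicit description of $\vf$, use it to identify the unique egalitarian root, and then linearize $\vf$ about that root so that linear stability becomes a spectral condition on $\mM(\vzero;\vbeta)$. Concretely, I would start from $\mL_\alpha\vs = \mLambda\ve$, substitute $\mA \mapsto \lambda\mA + (1-\lambda)\mDelta$ and $\vs \mapsto \vs + \delta\vs$, and expand. Since $\mL_\alpha$ and $\mLambda$ are affine in $\mA$ (through $\mA$, $\mA^{T}$, and the degree matrices $\mD^i,\mD^o$), keeping first-order terms in $1-\lambda$ yields $\mL_\alpha\,\delta\vs + (\delta\mL_\alpha)\vs = (\delta\mLambda)\ve$ up to $o(1-\lambda)$, where the increments $\delta\mA = (1-\lambda)(\mDelta-\mA)$, $\delta\mL_\alpha$, $\delta\mLambda$ are all linear in $\delta\mA$. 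Taking the expectation over $\mDelta$, using $\E[\mDelta] = m\mG(\vs)$, and dividing by $1-\lambda$ then gives a closed relation for $\vf(\vs,\mA)$; exactly as in the PageRank argument, differentiating the same relation in $\vs$ furnishes the Jacobian.

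For the egalitarian root: an egalitarian score $\vs = \theta\ve$ makes every utility $u_{ij}(\theta\ve)$ independent of $j$ (the linear feature contributes a constant, and the quadratic feature $(s_i-s_j)^2$ vanishes along $\vs=\theta\ve$), so $p_{ij}=1/n$ and $\mG \propto \mE$. At a fixed point this forces $\mA = m\mG \propto \mE$, which is symmetric and degree-balanced, hence $\mLambda\ve = \vzero$ and $\vs = \mL_\alpha^{-1}\vzero = \vzero$; so $\theta = 0$ and $\vs_0 = \vzero$ is the only egalitarian root. That it is in fact a root is the relation above specialized to this $\mA_0$, where $\E[\mDelta] - \mA_0 = \vzero$ kills the right-hand side.

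The core of the proof is the linearization at $\vs_0 = \vzero$, and two structural facts should carry it. First, because $\mLambda(\mA_0)\ve = \vzero$ at the egalitarian adjacency $\mA_0$, the perturbation $\delta\mL_\alpha$ always multiplies $\vs = \vzero$ and drops out at first order, so the score responds only to the degree imbalance: $\delta\vs = \mL_\alpha^{(0)-1}(\delta\mA^{T}-\delta\mA)\ve$, where $\mL_\alpha^{(0)} = \mL_\alpha(\mA_0)$. Second, $\mL_\alpha^{(0)}$ is symmetric, equal to $\alpha\mI$ plus a degree contribution; it has eigenvalue $\alpha$ on $\ve$ and a strictly larger eigenvalue on $\ve^{\perp}$, while $\mM(\vzero;\vbeta)$ has columns summing to zero, so its range lies in $\ve^{\perp}$, on which $\mL_\alpha^{(0)-1}$ acts as one scalar. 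Feeding $\E[\delta\mA] = (1-\lambda)(m\mG(\vs)-\mA)$ and the identities relating $\mG\ve$ and $\mG^{T}\ve$ to $\vgamma$ into the implicit relation and differentiating in $\vs$, the Jacobian of $\vf$ at $\vs_0$ should come out as $\mL_\alpha^{(0)-1}$ applied to $m\,\mM(\vzero;\vbeta)$ plus a term proportional to the projector $\mI - n^{-1}\mE$, minus $\mI$; the scalar from $\mL_\alpha^{(0)-1}$ on $\ve^{\perp}$ together with that projector term then repackages negativity of the Jacobian as the eigenvalues of $\mM(\vzero;\vbeta) - 2n^{-1}(\mI - n^{-1}\mE)$ lying strictly below $\alpha n/m$. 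The remaining modes---the $\ve$-direction in $\vs$ and the degree-balanced perturbations of $\mA$---decay at a fixed negative rate and never threaten stability.

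The main obstacle I expect is exactly the step compressed into ``plus a term proportional to the projector'': one must be scrupulous about how an update $\delta\mA$ changes the in- and out-degree matrices, not only $\mA$ itself, inside both $\mL_\alpha$ and $\mLambda$; determine which of those contributions survive once $\vs_0 = \vzero$ is imposed; and keep the powers of $n$ and $m$---and the normalization of $\mM$ from \cref{eq:dgamma}---straight. It is that careful accounting of the degree-matrix terms that produces the exact shift $2n^{-1}(\mI - n^{-1}\mE)$ and the threshold $\alpha n/m$ rather than a cruder bound; the remainder is linear algebra of the same flavor as the degree and PageRank cases.
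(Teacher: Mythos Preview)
Your strategy---perturb $\mL_\alpha\vs = \mLambda\ve$, take expectations through $\E[\mDelta] = m\mG$, and linearize at $\vs_0 = \vzero$---matches the paper's proof. Two comments. First, your uniqueness argument assumes the specific linear and quadratic features of the main text and passes through a fixed-point adjacency $\mA_0$; the paper's argument is simpler and fully general: if $\vs = s\ve$ is a SpringRank score of \emph{any} $\mA$, the defining relation reduces to $\alpha s\ve = \vd^i-\vd^o$, and premultiplying by $\ve^T$ (total in-degree equals total out-degree) forces $s = 0$ directly. Second, be careful with the step where you write $\delta\vs = \mL_\alpha^{(0)-1}(\delta\mA^T - \delta\mA)\ve$ and then ``differentiate in $\vs$'': that identity drops the term $(\delta\mL_\alpha)\vs$ because $\vs_0 = \vzero$, but when you differentiate in $\vs$ the product rule hands this term back as $\E[\delta\mL_\alpha]/(1-\lambda)\big|_{\vs_0}$, which the paper writes via the Laplacian $\mL_\mG$ of $\mG$ and which is precisely the source of the shift $-2n^{-1}(\mI - n^{-1}\mE)$. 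You already flag this degree-matrix bookkeeping as the main obstacle, so with that caution your plan goes through.
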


We will break the proof into a series of three lemmas. 
The first lemma calculates the analytical form of $\vf$. 
The second shows that $\vs_0=\vzero$ is the unique egalitarian fixed point of the long-memory limiting dynamics $\vf$. 
The third gives the criterion for linear stability. 

\begin{lm}
    The deterministic approximant $\vf$ for the SpringRank vector is given by 
    \begin{align}
         \vf(\vs, \mA) = \vs + \mL_\alpha^{-1}\left(-\alpha\vs  - m\left(n^{-1}\mL_{\mG}\vs - (n^{-1}\ve - \vgamma)\right)\right)\;, \label{eq:f_springrank}
     \end{align}
     where $\mL_{\mG} = \mGamma + n^{-1}\mI - (\mG + \mG^T)$.
      
\end{lm}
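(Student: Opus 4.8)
The plan is to recognize that, in the long-memory limit, $\vf$ is simply a directional derivative of the SpringRank map $\sigma$, and to obtain it by implicitly differentiating the defining linear system $\mL_\alpha(\mA)\vs = \mLambda\ve$. The enabling fact is that $\mL_0(\mA) := \mD^i + \mD^o - (\mA + \mA^T)$ is exactly the graph Laplacian of the symmetrized matrix $\mA + \mA^T$, so $\mL_\alpha(\mA) = \mL_0(\mA) + \alpha\mI \succeq \alpha\mI \succ 0$ whenever $\alpha > 0$; hence $\mL_\alpha(\cdot)^{-1}$, and therefore $\sigma(\mA) = \mL_\alpha(\mA)^{-1}\mLambda(\mA)\ve$, is a smooth function of the entries of $\mA$ near every $\mA$. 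Writing $h = 1 - \lambda$ and $\lambda\mA + (1-\lambda)\mDelta = \mA + h(\mDelta - \mA)$, a second-order Taylor expansion of $\sigma$ gives $\E[\sigma(\mA + h(\mDelta - \mA))] = \vs + h\, D\sigma(\mA)[\E[\mDelta] - \mA] + O(h^2)$, where the remainder is genuinely $O(h^2)$ because $\mDelta$ has nonnegative entries summing to $m$, so $\norm{\mDelta - \mA}$ is bounded independently of $h$. Since $\E[\mDelta]$ is a deterministic matrix ($m$ times the rate matrix $\mG$), dividing by $h$ and sending $\lambda \to 1$ yields $\vf(\vs, \mA) = D\sigma(\mA)[\,m\mG - \mA\,]$.

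Next I would compute $D\sigma(\mA)[\mH]$ in closed form for an arbitrary direction $\mH$. Because the maps $\mA \mapsto \mL_0(\mA)$ and $\mA \mapsto \mLambda(\mA)\ve$ are linear (only the $\alpha\mI$ summand of $\mL_\alpha$ is constant), differentiating $\mL_\alpha(\mA)\,\sigma(\mA) = \mLambda(\mA)\ve$ and rearranging, using $\mL_\alpha^{-1}\mLambda(\mA)\ve = \vs$, gives $D\sigma(\mA)[\mH] = \mL_\alpha^{-1}\big(\mLambda(\mH)\ve - \mL_0(\mH)\vs\big)$. Substituting $\mH = m\mG - \mA$ and expanding by linearity splits $\vf$ into a rate-matrix part built from $\mLambda(\mG)\ve$ and $\mL_0(\mG)\vs$, and an $\mA$-part built from $\mLambda(\mA)\ve$ and $\mL_0(\mA)\vs$, all premultiplied by $\mL_\alpha^{-1}$.

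The heart of the argument is matching these pieces with the objects in the statement. For the rate matrix, the rows of $\mG$ have constant sum $n^{-1}$ while its column sums form $\vgamma$, so the in- and out-degree diagonals of $\mG$ are $\mGamma$ and $n^{-1}\mI$; hence $\mL_0(\mG) = \mGamma + n^{-1}\mI - (\mG + \mG^T) = \mL_{\mG}$, and $\mLambda(\mG)\ve$ is, up to the overall sign fixed by the convention for $\mLambda$, the vector $n^{-1}\ve - \vgamma$. For the $\mA$-part, $\mL_0(\mA) = \mL_\alpha(\mA) - \alpha\mI$ together with the SpringRank relation $\mLambda(\mA)\ve = \mL_\alpha(\mA)\vs$ makes the $\mA$-terms combine so that, after premultiplying by $\mL_\alpha^{-1}$, they contribute exactly the leading $\vs$ and the $-\alpha\vs$ appearing inside the parentheses of the stated formula. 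Collecting the survivors gives the displayed closed form for $\vf$, with all explicit $\mA$-dependence having collapsed into the prefactor $\mL_\alpha^{-1}$ and the regularization term $-\alpha\vs$.

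I expect two points to require care rather than ingenuity. First, the limit-equals-derivative step: one must note that the numerator in the definition of $\vf$ vanishes as $\lambda \to 1$ only because $\vs = \sigma(\mA)$ (otherwise the limit is infinite), and then use $C^2$-smoothness of $\sigma$ and the uniform bound on $\norm{\mDelta - \mA}$ to kill the stochastic second-order term in expectation. Second, the bookkeeping in the final collapse: one must keep the in- and out-degree diagonals $\mD^i$, $\mD^o$ straight through every substitution, read off the in- and out-degrees of $\mG$ correctly as $\vgamma$ and $n^{-1}\mI$, and apply the SpringRank relation precisely where the $\mA$-terms telescope. Everything downstream of the implicit differentiation is routine linear algebra.
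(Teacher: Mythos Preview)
Your approach is essentially the same as the paper's: both compute the first-order change in $\vs$ induced by perturbing $\mA$ in the direction $\E[\mDelta]-\mA = m\mG-\mA$, via implicit differentiation of $\mL_\alpha\vs=\mLambda\ve$, then identify $\mL_0(\mG)=\mL_{\mG}$, $\mLambda(\mG)\ve=\vgamma-n^{-1}\ve$, and use the SpringRank relation together with $\mL_0=\mL_\alpha-\alpha\mI$ to collapse the $\mA$-terms. The only cosmetic difference is packaging: you invoke smoothness and a Taylor expansion up front to write $\vf=D\sigma(\mA)[m\mG-\mA]$ directly, whereas the paper computes the increments $\delta\mL$, $\delta\mLambda$, $\delta\vs$ by hand before passing to the limit; the algebra that follows is identical.
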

\begin{proof}
    
    Let us fix an implicit time step $t$. 
    Here and below, we use the notational template $\delta M = M\tplusone - M\t$ to refer to increments in various quantities under the dynamics \eqref{eq:dynamics}. 
    For example, $\delta\mA = \mA\tplusone - \mA\t$  refers to the increment in $\mA$ under the dynamics. 
    We compute directly
    \begin{align*}
        \delta \mA   &= (\lambda - 1)(\mA - \mDelta) \\ 
        \delta \mD^o &= (\lambda - 1)(\mD^o - \diag (\mDelta \ve)) \\ 
        \delta \mD^i &= (\lambda - 1)(\mD^i - \diag (\mDelta^T\ve))\;.
    \end{align*}
    We can also explicitly write out formulae for the increments in $\mL_\alpha$ and $\mLambda$: 
    \begin{align}
        \delta \mLambda &= \delta \mD^i - \delta \mD^o \nonumber \\
                        &= (\lambda - 1)\left[\mD^i - \mD^o + \diag((\mDelta-\mDelta^T)\ve) \right] \nonumber \\ 
                        &= (\lambda - 1)\left[\mLambda + \diag((\mDelta-\mDelta^T)\ve) \right] \;, \label{eq:dlambda} \\ 
        \delta \mL_\alpha &=  \delta \mD^i + \delta \mD^o - (\delta \mA  +\delta \mA^T)  \nonumber \\ 
                   &= (\lambda - 1)\left[\mD^i + \mD^o - \diag(\mDelta^T\ve + \mDelta\ve) - (\mA + \mA^T) + \mDelta + \mDelta^T  \right] \nonumber \\ 
                   &= (\lambda - 1)\left[\mL - \diag(\mDelta^T\ve + \mDelta\ve) + \mDelta + \mDelta^T  \right] \nonumber \\ 
                   &\triangleq (\lambda - 1)\left[\mL - \mL_{\mDelta} \right]\;, \label{eq:dL}
    \end{align}
    where we have given a name to the Laplacian $\mL_{\mDelta} = \diag(\mDelta^T\ve + \mDelta\ve) - \mDelta^T - \mDelta$ of $\mDelta$. 
    Note that $\delta\mL_\alpha$ does not depend on $\alpha$, and we therefore simply write $\delta\mL = \delta\mL_\alpha$.

    We can now formulate a simple condition for equilibrium in expectation. 
    We have 
    \begin{align*}
        (\mL_\alpha + \delta\mL) (\vs + \delta\vs)
        = (\mLambda  + \delta\mLambda)\ve\;.
    \end{align*}
    Subtracting the SpringRank relation $\mL_\alpha\vs = \mLambda \ve$ from each side of this expression, we obtain 
    \begin{align*}
        (\mL_\alpha + \delta\mL)\delta\vs
        = (\delta\mLambda)\ve - (\delta\mL) \vs \;. 
    \end{align*}
    Since $\delta \mL = O(1-\lambda)$, the lefthand matrix is invertible in for small $\lambda$ provided that $\alpha > 0$.  
    We therefore obtain
    \begin{align*}
        \delta \vs &= \left(\mL_\alpha^{-1} + O(1-\lambda) \right)((\delta\mLambda)\ve - (\delta\mL) \vs) \\ 
        &= \mL_\alpha^{-1} ((\delta\mLambda)\ve - (\delta\mL )\vs) + O((1-\lambda)^2)\;.
    \end{align*} 
    The term $O((1-\lambda)^2)$ arises from the product of $O(1-\lambda)$ and the copy of $(\lambda - 1)$ within $\delta\mLambda$ and $\delta\mL$. 
    Taking expectations, 
    \begin{align*}
        \E[\delta\vs] = \mL_\alpha^{-1} (\E[\delta\mLambda]\ve - \E[\delta\mL]\vs) + O((1-\lambda)^2)\;.
    \end{align*}
    We next insert the expressions \eqref{eq:dlambda} and \eqref{eq:dL} and use the fact that $\E[\mDelta] = m\mG$. 
    This gives 
    \begin{align*}
        \E[\delta\vs] = (1-\lambda)\mL_\alpha^{-1} \left(\left[\mL - m\mL_{\mG}\right]\vs - \left[\mLambda + m\cdot\diag((\mG - \mG^T)\ve)\right]\ve\right) + O((1-\lambda)^2)\;.
    \end{align*}
    We can simplify this expression by recalling that $(\mL + \alpha \mI)\vs = \mLambda \ve$ by definition, as well as the identities $\mG\ve = n^{-1}\ve$ and $\mG^T\ve = \vgamma$. 
    Inserting these identities and simplifying yields 
    \begin{align*}
        &= (1-\lambda)\mL_\alpha^{-1}\left(-\alpha\vs  - m\left(\mL_{\mG}\vs + (n^{-1}\ve - \vgamma)\right)\right) + O((1-\lambda)^2) \;.
    \end{align*}
    We now construct $\vf$, obtaining
    Since $\E[\delta \vs] = \E[\sigma(\lambda \mA + (1-\lambda)\mDelta)]$, we can write 
    \begin{align*}
        \vf(\vs, \mA) &= \vs + \lim_{\lambda \rightarrow 1}\frac{\E[\delta \vs]}{1-\lambda} \\
                      &= \vs - \mL_\alpha^{-1}\left[\alpha\vs  + m\left(\mL_{\mG}\vs + (n^{-1}\ve - \vgamma)\right)\right]\;, 
    \end{align*}
    as was to be shown. 
\end{proof}

\begin{lm}
    When $\sigma$ is the SpringRank map, the vector $\vs_0 = \vzero$ is a root of $\vf$, and is the only egalitarian fixed point. 
\end{lm}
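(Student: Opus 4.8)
The plan is to substitute $\vs_0 = \vzero$ into the closed-form expression \eqref{eq:f_springrank} for $\vf$ derived in the previous lemma, and then to argue separately that no other constant vector $\vs = \theta\ve$ with $\theta \neq 0$ can be a root. For the first part, setting $\vs = \vzero$ kills the $\alpha\vs$ term and the $\mL_{\mG}\vs$ term, leaving $\vf(\vzero,\mA) = -m\,\mL_\alpha^{-1}(n^{-1}\ve - \vgamma)$. I then need to check that $n^{-1}\ve - \vgamma = \vzero$ at the egalitarian state. This follows because when $\vs = \vzero$ the utilities $u_{ij}(\vzero)$ are identical across $j$ (each feature map $\phi^\ell$ evaluated at a constant vector gives a matrix with constant rows — for the linear feature $s_j = 0$ and for the quadratic feature $(s_i-s_j)^2 = 0$), so $p_{ij} = 1/n$ for all $i,j$, hence $\mG = n^{-1}\mE$ and $\vgamma = \mG^T\ve / n$... more precisely $\vgamma = n^{-1}\mG^T\ve = n^{-1}\ve$. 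Thus $n^{-1}\ve - \vgamma = \vzero$ and $\vf(\vzero,\mA) = \vzero$ for every $\mA$, so $\vs_0 = \vzero$ is a root.

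For uniqueness among egalitarian vectors, suppose $\vs = \theta\ve$. Again every feature map has constant rows when evaluated at $\theta\ve$ — for the linear feature, $\phi^1_{ij}(\theta\ve) = \theta$ independent of $j$; for the quadratic, $\phi^2_{ij}(\theta\ve) = 0$ — so once more $\mG = n^{-1}\mE$ and $\vgamma = n^{-1}\ve$, killing the last term of \eqref{eq:f_springrank}. Also $\mL_{\mG}\,(\theta\ve) = \theta\,\mL_{\mG}\ve = \vzero$ since $\mL_{\mG}$ is a Laplacian-type matrix with zero row sums (indeed $\mL_{\mG}\ve = \mGamma\ve + n^{-1}\ve - \mG\ve - \mG^T\ve = \vgamma + n^{-1}\ve - n^{-1}\ve - \vgamma = \vzero$). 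Hence $\vf(\theta\ve,\mA) = \theta\ve - \mL_\alpha^{-1}(\alpha\theta\ve)$. Now at such a state $\mA$ is symmetric-ish enough that $\mLambda = \mD^i - \mD^o = \vzero$ (in- and out-degrees agree at the egalitarian fixed point where $\mA = n^{-2}\mE$ or more generally $\mA^T\ve = \mA\ve$), so the SpringRank relation $\mL_\alpha\vs = \mLambda\ve = \vzero$ forces $\vs = \vzero$ because $\mL_\alpha$ is invertible for $\alpha > 0$; this already pins down $\theta = 0$. Alternatively, working directly from $\vf(\theta\ve,\mA) = \theta\ve - \alpha\theta\,\mL_\alpha^{-1}\ve$: since $\mL\ve = \vzero$ we get $\mL_\alpha\ve = \alpha\ve$, so $\mL_\alpha^{-1}\ve = \alpha^{-1}\ve$, giving $\vf(\theta\ve,\mA) = \theta\ve - \theta\ve = \vzero$ — which shows every $\theta\ve$ is a root of $\vf$ as a function of $\vs$ alone, so uniqueness must instead come from the constraint that $\vs$ be an actual SpringRank vector of the corresponding $\mA$, i.e. from $\mL_\alpha\vs = \mLambda\ve$ with $\mLambda = \vzero$ at egalitarian $\mA$.

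The main obstacle is bookkeeping the distinction between "$\vf(\vs,\mA) = \vzero$ as an equation in $\vs$ for fixed $\mA$" and "$\vs$ is simultaneously a root of $\vf$ and a genuine SpringRank score of $\mA$": the computation above shows $\vf$ vanishes on the whole line $\theta\ve$, so uniqueness of the egalitarian fixed point is really enforced by the self-consistency requirement $\mL_\alpha\vs = \mLambda\ve$ together with $\mLambda = \vzero$ at any egalitarian configuration of $\mA$, which forces $\vs = \vzero$. I would make this self-consistency argument explicit, carefully noting that at an egalitarian fixed point of the dynamics the matrix $\mA$ has equal in- and out-degrees so $\mLambda = \vzero$, and conclude that $\vs_0 = \vzero$ is the one and only egalitarian fixed point.
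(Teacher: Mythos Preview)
Your verification that $\vs_0 = \vzero$ is a root matches the paper exactly. You also correctly identify the subtle point the paper glosses over: $\vf(\theta\ve,\mA)$ actually vanishes for \emph{every} $\theta$ (your computation $\mL_\alpha^{-1}\ve = \alpha^{-1}\ve$ is right), so uniqueness cannot come from $\vf$ alone and must be enforced by the self-consistency constraint that $\theta\ve$ be a genuine SpringRank vector of $\mA$.

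Where your route diverges from the paper is in how you exploit that constraint. You argue that at an egalitarian fixed point $\mA$ satisfies $\vd^i = \vd^o$ elementwise (i.e.\ $\mLambda = \vzero$), whence $\mL_\alpha\vs = \vzero$ forces $\vs = \vzero$. This can be made rigorous, but it requires an auxiliary step you only gesture at: you need the fixed-point condition on $\mA$ itself, namely $\mA = m\mG$, together with the observation that constant $\vs$ gives $\mG = n^{-2}\mE$, so that $\mA$ is symmetric. Without that step the claim $\mLambda = \vzero$ is unjustified---from the SpringRank relation alone one only gets $\vd^i - \vd^o = \alpha\theta\ve$, not $\vd^i = \vd^o$.

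The paper sidesteps this entirely. From $\mL_\alpha(\theta\ve) = \mLambda\ve$ and $\mL\ve = \vzero$ it obtains $\alpha\theta\ve = \vd^i - \vd^o$, then simply premultiplies by $\ve^T$: since total in-degree equals total out-degree for \emph{any} matrix, the right side sums to zero, giving $\alpha\theta n = 0$ and hence $\theta = 0$. This one-line trick works for arbitrary $\mA$ and avoids any appeal to the dynamics of $\mA$. Your approach is salvageable but longer; the paper's scalar-identity shortcut is the cleaner move here.
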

\begin{proof}
    To show that $\vs_0 = \vzero$ is a fixed point of $\vf$, it suffices to insert this solution into \eqref{eq:f_springrank} and simplify, noting that, when $\vs = \vzero$, $\vgamma = n^{-1}\ve$. 
    To show that it is the unique egalitarian root realizable as a SpringRank score, suppose that $s \ve$ were a SpringRank score for some $s \neq 0$. 
    Inserting this into \eqref{eq:springrank} and using the fact that $\ve$ is a zero eigenvector of the unregularized Laplacian, we would have 
    \begin{align*}
        \alpha s \ve = \vd^i - \vd^o\;.
    \end{align*}
    The total in-degree must equal the total out-degree. 
    Pre-multiplying by $\ve$ therefore zeros out the righthand, leaving: 
    \begin{align*}
        \alpha s \ve^T\ve = \alpha sn = 0\;,
    \end{align*}
    which is a contradiction unless $s = 0$. 
\end{proof}
\begin{lm}
    The egalitarian root $\vs = \vzero$ is a linearly stable root of the SpringRank dynamics in the long-memory limit if and only if the matrix
    \begin{align*}
        \mM(\vzero;\vbeta) - 2n^{-1}(\mI - n^{-1}\mE)
    \end{align*}
    has eigenvalues strictly smaller than $\frac{\alpha}{m}$.
\end{lm}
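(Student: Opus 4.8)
The plan is to linearize $\vf$ at the egalitarian fixed point and read off the spectrum of its Jacobian. Write $\vs_0=\vzero$ and let $\mA_0=\frac{m}{n^2}\mE$ be the corresponding equilibrium of the dynamics \eqref{eq:dynamics} (the expected fixed point of $\mA$ when every $p_{ij}=1/n$). First I would evaluate each ingredient of the closed form \eqref{eq:f_springrank} of $\vf$ at this state. Since the choice probabilities are uniform at $\vs_0$, the rate matrix is $\mG=n^{-2}\mE$, so $\vgamma=n^{-1}\ve$, $\mGamma=n^{-1}\mI$, and
\[
  \mL_{\mG}(\vzero)=\mGamma+n^{-1}\mI-(\mG+\mG^T)=2n^{-1}(\mI-n^{-1}\mE);
\]
and since $\mA_0=m\mG$ has $\mD^i=\mD^o=\frac{m}{n}\mI$,
\[
  \mL_\alpha(\mA_0)=\mD^i+\mD^o-(\mA_0+\mA_0^T)+\alpha\mI=\alpha\mI+m\,\mL_{\mG}(\vzero).
\]
This last identity --- the regularized Laplacian at equilibrium equals $\alpha\mI$ plus $m$ times the Laplacian of the uniform rate matrix --- is the structural fact that makes the calculation collapse, and I would verify it by direct substitution. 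Note $\mL_\alpha(\mA_0)$ has eigenvalue $\alpha$ along $\ve$ and $\frac{2m}{n}+\alpha$ on $\ve^\perp$, hence is invertible for $\alpha>0$.

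Next I would differentiate $\vf$ with respect to $\vs$, holding $\mA$ (and therefore $\mL_\alpha$) fixed --- exactly the convention used in the PageRank argument --- and only afterward substitute $\mA=\mA_0$. Two simplifications occur at $\vs_0=\vzero$: in $\dds{}\bigl(\mL_{\mG}(\vs)\vs\bigr)$ the term obtained by differentiating $\mL_{\mG}$ is multiplied by $\vs=\vzero$ and drops out, leaving $\mL_{\mG}(\vzero)$; and $\dds{\vgamma}$ evaluated at $\vs_0$ equals $\mM(\vzero;\vbeta)$ by \eqref{eq:dgamma}. Collecting the three contributions $\alpha\mI$, $m\,\mL_{\mG}(\vzero)$, and $-m\,\mM(\vzero;\vbeta)$ and using the identity above to merge the first two into a copy of $\mL_\alpha(\mA_0)$, the Jacobian collapses to
\[
  \mJ(\vzero)=-\mL_\alpha(\mA_0)^{-1}\bigl[\mL_\alpha(\mA_0)-m\,\mM(\vzero;\vbeta)\bigr]=-\mI+m\,\mL_\alpha(\mA_0)^{-1}\mM(\vzero;\vbeta).
\]

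Finally I would extract the eigenvalues. The matrix $\mM(\vzero;\vbeta)$ carries the factor $\mI-n^{-1}\mE$, whose left and right kernels both contain $\ve$; together with the fact (checked directly for the linear and quadratic features of the main text) that the $\ve$-directional derivative of each feature row lies in $\mathrm{span}\{\ve\}$, this gives $\ve^T\mM(\vzero;\vbeta)=\vzero^T$ and $\mM(\vzero;\vbeta)\ve=\vzero$, so $\mM(\vzero;\vbeta)$ respects the splitting $\R^n=\mathrm{span}\{\ve\}\oplus\ve^\perp$ and annihilates $\ve$. Along $\ve$ the Jacobian has the harmless eigenvalue $-1$ (and in any case every SpringRank vector satisfies $\ve^T\vs=0$, so only $\ve^\perp$ is reachable). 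On $\ve^\perp$, $\mL_\alpha(\mA_0)^{-1}$ acts as the scalar $\bigl(\tfrac{2m}{n}+\alpha\bigr)^{-1}$, so an eigenvalue $\nu$ of $\mM(\vzero;\vbeta)$ yields the eigenvalue $\tfrac{m\nu}{(2m/n)+\alpha}-1$ of $\mJ(\vzero)$, which has negative real part iff $\mathrm{Re}\,\nu<\tfrac{2}{n}+\tfrac{\alpha}{m}$, i.e. iff $\mathrm{Re}\bigl(\nu-\tfrac{2}{n}\bigr)<\tfrac{\alpha}{m}$. Since $\nu-\tfrac{2}{n}$ runs over the eigenvalues of $\bigl(\mM(\vzero;\vbeta)-2n^{-1}(\mI-n^{-1}\mE)\bigr)$ restricted to $\ve^\perp$, and that matrix also has the eigenvalue $0<\alpha/m$ along $\ve$, linear stability holds iff every eigenvalue of $\mM(\vzero;\vbeta)-2n^{-1}(\mI-n^{-1}\mE)$ lies below $\alpha/m$, as claimed.

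I expect the hard part to be making the Jacobian computation close cleanly: differentiating $\vf$ in the right frame (holding $\mA$ fixed, substituting $\mA_0$ only afterward), correctly handling the product-rule term $\dds{\mL_{\mG}}\vs$ --- generically nonzero but vanishing precisely because the egalitarian score is $\vzero$ --- and, above all, spotting the identity $\mL_\alpha(\mA_0)=\alpha\mI+m\,\mL_{\mG}(\vzero)$, without which the identity-matrix terms do not consolidate and the threshold does not reduce to $\alpha/m$. A secondary technical point is justifying $\mM(\vzero;\vbeta)\ve=\vzero$ for general feature maps so that the problem honestly reduces to $\ve^\perp$.
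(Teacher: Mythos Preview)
Your argument is correct and follows the same route as the paper: differentiate \eqref{eq:f_springrank} in $\vs$ with $\mA$ held fixed, evaluate at $\vs_0=\vzero$ using $\mL_{\mG}(\vzero)=2n^{-1}(\mI-n^{-1}\mE)$ and $\dds{\vgamma}(\vzero)=\mM(\vzero;\vbeta)$, and read off the spectrum via the decomposition $\mathrm{span}\{\ve\}\oplus\ve^\perp$. The one substantive difference is your explicit identity $\mL_\alpha(\mA_0)=\alpha\mI+m\,\mL_{\mG}(\vzero)$, which lets you collapse the Jacobian to $-\mI+m\,\mL_\alpha(\mA_0)^{-1}\mM(\vzero;\vbeta)$ and then use that $\mL_\alpha(\mA_0)^{-1}$ acts as the scalar $(2m/n+\alpha)^{-1}$ on $\ve^\perp$; the paper instead leaves $\mJ(\vzero)=-\mL_\alpha^{-1}[\alpha\mI+m(\mL_{\mG}-\mM)]$ and invokes positive-definiteness of $\mL_\alpha^{-1}$ to pass to the bracket, which is slightly less explicit but amounts to the same computation once one notes the scalar action on $\ve^\perp$.
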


\begin{proof}
    We need to compute $\mJ(\vs_0)$, the Jacobian matrix of $\vf$ at $\vs_0 = \vzero$. 
    The fixed point will be stable provided that $\mJ(\vs_0)$ has strictly negative eigenvalues.
    To compute this Jacobian, we compute derivatives in \eqref{eq:f_springrank}. 
    Doing so and applying the product rule, we have 
    \begin{align*}
        \dds{f(\vs)} = \mI - \mL_{\alpha}^{-1}\left(\alpha \mI + m\left(n^{-1}\dds{(\mL_\mG\vs)}  - \dds{\vgamma}\right)\right)\;.
    \end{align*}
    We calculate $\dds{\mL_{\mG}}$ in \Cref{eq:L_G}, now obtaining
    \begin{align*}
        \dds{f(\vs)} = \mI - \mL_{\alpha}^{-1}\left(\alpha \mI + m\left(n^{-1}\left[\mL_{\mG} + \mSigma \dds{\vgamma} - \dds{\vgamma}(\mS^T + (\ve^T\vs)\mI)\right]  - \dds{\vgamma}\right)\right)\;.
    \end{align*}

    Evaluating this expression at $\vs = \vzero$, we have 
    \begin{align*}
        \mJ(\vzero) =  - \mL_{\alpha}^{-1}\left(\alpha \mI + m\left(n^{-1}\mL_{\mG} - \dds{\vgamma(\vzero)}\right)\right)\;,
    \end{align*}
    where $\mL_\mG$ must also be evaluated at $\vs = \vzero$. 
    We have $\mG(\vzero) = n^{-1}\mE$, which implies $\mL_{\mG} = 2(\mI - n^{-1}\mE)$. 
    We insert this expression and the formula for $\dds{\vgamma}$ given in \eqref{eq:dgamma}, obtaining
    \begin{align*}
        \mJ(\vzero) = - \mL_{\alpha}^{-1}\left[\alpha \mI + mn^{-1}(\mI - n^{-1}\mE)\left(2\mI - \sum_{i = 1}^n\sum_{\ell = 1}^k \beta_\ell \dds{\phi_i^{\ell}(\vs_0)}\right)\right]\;. 
    \end{align*}
    Since $\mL_{\alpha}$ is symmetric and positive-definite, $\mL_{\alpha}^{-1}$ is as well. 
    The stability of the egalitarian fixed point is therefore determined by the eigenvalues of the matrix inside the brackets. 
    Multiplying by $nm^{-1}$, we find that a necessary and sufficient condition is that the matrix 
    \begin{align*}
        (\mI - n^{-1}\mE)\left(2\mI - \sum_{i = 1}^n\sum_{\ell = 1}^k \beta_\ell \dds{\phi_i^{\ell}(\vs_0)}\right) = \mM(\vzero;\vbeta) - 2n^{-1}(\mI - n^{-1}\mE)
    \end{align*}
    have eigenvalues no larger than $\frac{\alpha}{m}$, completing the proof. 
\end{proof}

\begin{cor}
    In the SpringRank-Linear model, $\vs_0 = \vzero$ is a linearly stable fixed point of $\vf$ if and only if $\beta < 2 + \frac{\alpha n}{m}$. 
\end{cor}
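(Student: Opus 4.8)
The plan is to read this off directly from the Lemma immediately preceding it, which already reduces linear stability of the egalitarian root $\vs_0 = \vzero$ to a spectral statement: $\vs_0$ is linearly stable in the long-memory limit if and only if every eigenvalue of $\mM(\vzero;\vbeta) - 2n^{-1}(\mI - n^{-1}\mE)$ is strictly smaller than $\alpha/m$. So the only work left is to (i) evaluate $\mM(\vzero;\vbeta)$ for the linear feature map $\phi_{ij}(\vs) = s_j$, and (ii) compute the spectrum of the resulting $n\times n$ matrix.

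For step (i) I would use that the linear feature has a point-independent Jacobian: the $i$-th row $\phi_{i\cdot}(\vs) = (s_1,\dots,s_n)$ is the identity map on $\mathbb{R}^n$, so $\partial \phi_{i\cdot}(\vs)/\partial\vs = \mI$ for every $i$ and every base point. Plugging this into \eqref{eq:dgamma} is exactly the computation already performed for the PageRank-Linear model, and it yields $\mM(\vzero;\vbeta) = \beta n^{-1}(\mI - n^{-1}\mE)$. Hence
\begin{align*}
\mM(\vzero;\vbeta) - 2n^{-1}(\mI - n^{-1}\mE) = (\beta - 2)\,n^{-1}(\mI - n^{-1}\mE)\;.
\end{align*}
For step (ii) I would observe that $\mI - n^{-1}\mE$ is the orthogonal projector onto $\ve^{\perp}$, so its eigenvalues are $0$ (eigenvector $\ve$) and $1$ (multiplicity $n-1$, on $\ve^{\perp}$). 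Thus the matrix above has eigenvalue $0$ in the direction $\ve$ and eigenvalue $(\beta-2)/n$ on $\ve^{\perp}$. Since $\alpha>0$ and $m>0$, the zero eigenvalue automatically satisfies the Lemma's bound, so the binding constraint is $(\beta-2)/n < \alpha/m$; rearranging gives $\beta < 2 + \alpha n/m$, which is the claimed criterion (and the "only if" direction follows from the same chain of equivalences).

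I do not expect any genuine obstacle: this is a direct specialization of the preceding Lemma. The only step requiring care is the evaluation of $\mM(\vzero;\vbeta)$ in (i)—in particular keeping track of the factors of $n$ arising from the sum over $i$ in \eqref{eq:dgamma} and from the prefactor $n^{-1}(\mI - n^{-1}\mE)$—but this bookkeeping is identical to the PageRank-Linear case and introduces no new ideas.
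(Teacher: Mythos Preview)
Your proposal is correct and follows essentially the same route as the paper's proof: specialize the preceding stability lemma by computing $\mM(\vzero;\beta) = \beta n^{-1}(\mI - n^{-1}\mE)$ for the linear feature, identify the eigenvalues of $(\beta-2)n^{-1}(\mI - n^{-1}\mE)$ as $0$ on $\ve$ and $(\beta-2)/n$ on $\ve^{\perp}$, and read off the threshold $\beta < 2 + \alpha n/m$. The bookkeeping and logic match the paper line for line.
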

\begin{proof}
    It suffices to specialize \Cref{thm:springrank_stability} to the case of linear features. 
    In particular, we have $\mM(\vzero; \beta) = \beta n^{-1}(\mI - n^{-1}\mE)$. 
    We therefore require that the matrix 
    \begin{align*}
        \beta n^{-1}(\mI - n^{-1}\mE) - 2n^{-1}(\mI - n^{-1}\mE) = n^{-1}(\beta - 2)(\mI - n^{-1}\mE)
    \end{align*}
    have eigenvalues smaller than $\frac{\alpha}{m}$. 
    We can compute the eigenvalues of this matrix analytically -- there is a zero eigenvalue corresponding to the vector $\ve$. 
    Then, any vector $\vv \perp \ve$ is also an eigenvector with eigenvalue $n^{-1}(\beta - 2)$. 
    We therefore require $n^{-1}(\beta - 2) < \frac{\alpha }{m}$, or $\beta < 2 + \frac{\alpha n}{m}$, completing the argument. 
\end{proof}

\begin{lm}
    We have 
    \begin{align}
        \dds{\mL_{\mG}\vs} &= \mL_{\mG} + \mSigma \dds{\vgamma} - \dds{\vgamma}(\mS^T + (\ve^T\vs)\mI)\;.\label{eq:L_G}
    \end{align}
\end{lm}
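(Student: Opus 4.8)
The plan is to differentiate $\mL_\mG\vs$ directly with the product rule, keeping separate track of the explicit trailing factor $\vs$ and of the implicit $\vs$-dependence carried by $\mG$ and $\vgamma$. The useful first step is a structural observation: since $\mG\ve = n^{-1}\ve$ and $\mG^T\ve = \vgamma$ hold at every time step, $\mL_\mG = \mGamma + n^{-1}\mI - (\mG + \mG^T)$ is exactly the combinatorial Laplacian of the symmetric matrix $\mG + \mG^T$, so that $(\mL_\mG\vs)_i = \sum_j (\mG_{ij} + \mG_{ji})(s_i - s_j)$. Differentiating this entrywise with respect to $s_l$ and summing over $j$, the contribution from the factor $(s_i - s_j)$ returns $(n^{-1}+\gamma_i)\delta_{il} - \mG_{il} - \mG_{li}$, i.e.\ the matrix $\mL_\mG$ itself; this is the first term on the right of \eqref{eq:L_G}. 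Everything else is the residual matrix with entries $\sum_j \big[\partial_{s_l}(\mG_{ij}+\mG_{ji})\big](s_i - s_j)$.

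Next I would reduce the residual. Separating $s_i - s_j$, the $s_i$ part gives $s_i\,\partial_{s_l}\!\sum_j(\mG_{ij}+\mG_{ji}) = s_i\,\partial_{s_l}\gamma_i$, which in matrix form is $\mS\,\dds{\vgamma}$ with $\mS = \diag(\vs)$. The $-s_j$ part contributes the two $\vs$-weighted contractions of $\partial\mG/\partial\vs$ over, respectively, its second and first indices. These I would evaluate with the multinomial-logit Jacobian $\partial_{s_l}\mG_{ij} = \mG_{ij}\big(\partial_{s_l}u_{ij} - \sum_k p_{ik}\,\partial_{s_l}u_{ik}\big)$ --- the same identity that produces \eqref{eq:dgamma} --- and then repackage them, using $\vgamma = \mG^T\ve$ and the row-stochasticity of $n\mG$, into $\dds{\vgamma}(\mS^T + (\ve^T\vs)\mI)$ up to a diagonal rescaling collected in $\mSigma$. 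Assembling the two pieces then yields \eqref{eq:L_G}. (Equivalently, one may compute $\dds{(\mG\vs)}$ and $\dds{(\mG^T\vs)}$ directly, the first simplifying because $\mG\vs$ is a row-wise average of $\vs$, and subtract these from $\dds{((\mGamma+n^{-1}\mI)\vs)} = \mGamma + n^{-1}\mI + \mS\,\dds{\vgamma}$.)

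The step I expect to be the main obstacle is precisely this repackaging of the $\vs$-weighted contractions of $\partial\mG/\partial\vs$: a priori those contractions involve the feature derivatives $\partial u_{ij}/\partial s_l$ in a way that does not obviously reduce to $\dds{\vgamma}$, and the cancellations appear only after one uses both the mean-subtracted structure of the logit Jacobian (so that each row of $\partial\mG/\partial\vs$ sums to zero) and the identity $\vgamma = \mG^T\ve$. A convenient end-of-proof check is that $\mSigma$ and $\mS^T + (\ve^T\vs)\mI$ both vanish at the egalitarian score $\vs = \vzero$, which is exactly the property relied on when \eqref{eq:L_G} is substituted into the computation of $\mJ(\vzero)$ in the proof of \Cref{thm:springrank_stability}.
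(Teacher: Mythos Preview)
Your parenthetical alternative---differentiate $\dds{(\mG\vs)}$ and $\dds{(\mG^T\vs)}$ separately and subtract from $\dds{((\mGamma+n^{-1}\mI)\vs)} = \mGamma + n^{-1}\mI + \mSigma\,\dds{\vgamma}$---is exactly the route the paper takes, and it is shorter than the Laplacian-entrywise plan you lead with. The paper never invokes the multinomial-logit Jacobian. Its two structural moves, which replace the ``repackaging'' you flag as the obstacle, are: (i) for $\mG^T\vs$ the paper uses the scalar factorization $\mG^T\vs = (\ve^T\vs)\,\vgamma$, so the product rule gives $\dds{(\mG^T\vs)} = \mG^T + (\ve^T\vs)\,\dds{\vgamma}$ in one line; and (ii) for $\mG\vs$ the product rule applied rowwise yields $\mG$ plus a term the paper identifies as $\dds{\vgamma}\,\mS^T$, where $\mS^T$ is the matrix each of whose columns is $\vs$. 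Assembling these with the diagonal piece gives \eqref{eq:L_G} directly.

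So the difficulty you anticipate---extracting $\dds{\vgamma}$ from $\vs$-weighted contractions of $\partial\mG/\partial\vs$ by exploiting the mean-subtracted structure of the logit derivative and row-stochasticity---is bypassed rather than resolved in the paper's argument: factorization (i) turns the derivative of $\mG^T\vs$ into a derivative of $\vgamma$ \emph{before} differentiating, so no cancellation is ever needed. Your Laplacian organization is not incorrect, but it hides these one-line simplifications and makes the computation look harder than it is. Your closing sanity check (that $\mSigma$ and $\mS^T+(\ve^T\vs)\mI$ vanish at $\vs=\vzero$) is correct and matches how the lemma is used downstream.
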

\begin{proof}
    We first compute the derivatives $\dds{(\mG\vs)}$ and $\dds{(\mG^T\vs)}$. 
    The $i$th component of $\mG\vs$ is $v_{i} = \sum_{j}\gamma_js_j$. 
    The product rule for scalar functions of vectors gives the $i$th row of the derivative: 
    \begin{align*}
        \dds{\mG\vs_i} &=\sum_j \gamma_j\ve_j + \sum_j s_j\dds{\gamma_j} = \vgamma + \sum_j s_j\dds{\gamma_j}\;.
    \end{align*}
    Written in matrix notation, the first term is $\mG$. 
    To write the second term in matrix form, note that we need to multiply $\dds{\vgamma}$ by the matrix each of whose columns is a copy of $\vs$. 
    This matrix is $\mS^T$. 
    We therefore obtain 
    \begin{align*}
        \dds{(\mG\vs)} = \mG +  \dds{\vgamma}\mS^T\quad \;.
    \end{align*}
    To compute the second derivative, note that $\mG^T\vs = \vgamma(\ve^T\vs)$, with $i$th component $\gamma_i \ve^T\vs$. 
    Using the product rule for scalar functions of vectors, we have
    \begin{align*}
        \dds{}\gamma_i \ve^T\vs = \gamma_i \ve + (\ve^T\vs)\dds{\gamma_i} . 
    \end{align*}
    The first term will become the matrix whose $i$th row is $\gamma_i$, i.e. $\mG^T$. 
    This yields 
    \begin{align*}
        \dds{(\mG^T\vs)}= \mG^T + (\ve^T\vs)\dds{\vgamma}\;.
    \end{align*}
    Combining these expressions yields our formula for $\dds{\mL_{\mG}\vs}$: 
    \begin{align*}
        \dds{\mL_{\mG}\vs} &= \dds{}\left[\mGamma\vs + \vs - \mG\vs - \mG^T\vs \right]\; \nonumber\\
        &= \mGamma + \mSigma \dds{\vgamma} + \mI - \left( \mG + \dds{\vgamma}\mS^T + \mG^T + (\ve^T\vs)\dds{\vgamma}\right) \nonumber \\ 
        &= \mL_{\mG} + \mSigma \dds{\vgamma} - \dds{\vgamma}(\mS^T + (\ve^T\vs)\mI)\;,
    \end{align*}
    as was to be shown. 
\end{proof}

\section{Parameter Estimation} \label{sec:inference}

    Throughout this section, we use the shorthand $\{\mA\t\} = \{\mA\t\}_{t = 0}^{\tau}$ to refer to temporal sequences of matrices up to fixed time $\tau$.  
	We now describe a simple maximum-likelihood model for learning the parameter $\vbeta$ from a sequence of observations $\{\mDelta\t\}$. 
	By construction, $\mDelta(\tau)$ depends on the sequence of state matrices $\{\mA\t\}$ only through the most recent state $\mA(\tau)$.
	We may therefore factor the probability of observing the data given a set of undetermined parameters as: 
	\begin{align*}
		\prob(\{\mDelta\t\};\mA(0), \lambda, \vbeta) = \prod_{t=0}^{\tau} \prob(\mDelta\t;\mA(t), \vbeta)\;.
	\end{align*}
	While the parameter $\lambda$ has disappeared from the righthand side, this expression is nevertheless implicitly a function of $\lambda$ since the value of $\mA(\tau)$ given $\mA(\tau-1)$ and $\mDelta(\tau-1)$ depends on $\lambda$.

	Let us write out a typical factor on the righthand side. 
	Let $\vk_i = \mDelta_{i\cdot}$, and let $K_i = \ve^T\vk_i$. 
	Then,  
	\begin{align*}
		 \prob(\mDelta\t;\mA(\tau), \vbeta) = \prod_{i = 1}^n\left( \frac{K_i}{\prod_{j = 1}^n k_{ij}!} \prod_{j = 1}^n \left(\gamma_{ij}\t\right)^{k_{ij}}\right)\;.
	\end{align*} 
	Taking logarithms and collecting terms that do not depend on $\vbeta$ or $\lambda$ into a constant $C\t$, we obtain 
	\begin{align*}
		\log \prob(\mDelta\t;\mA(t), \vbeta) = \sum_{i = 1}^n \sum_{j = 1}^n k_{ij}\t \log \gamma_{ij}\t + C\t. 
	\end{align*}
	The log-likelihood of the full sequence is then 
	\begin{align*}
		\mathcal{L}(\lambda, \vbeta; \{\mDelta\t\}, \mA(0)) \triangleq \log \prob(\{\mDelta\t\};\mA(0), \lambda, \vbeta) = \sum_{t = 0}^\tau \sum_{i = 1}^n \sum_{j = 1}^n k_{ij}\t \log \gamma_{ij}\t + C, 
	\end{align*}
	where $C = \sum_{t = 0}^\tau C\t$.
	The dependence on $\vbeta$ appears through $\gamma_{ij}$.

	The maximum likelihood approach encourages us to choose as parameter estimates $\hat{\lambda}$ and $\hat{\vbeta}$ the values
	\begin{align}
		\hat{\lambda}, \hat{\vbeta} = \argmax_{\lambda, \vbeta} \mathcal{L}(\lambda, \vbeta; \{\mDelta\t\}, \mA(0))\;. \label{eq:ML}
	\end{align}
	Standard theory of maximum likelihood in exponential families implies that $\mathcal{L}$ is convex in $\vbeta$ for any fixed $\lambda$. 
	This implies that, when $\hat{\lambda}$ is known, we can solve for $\hat{\vbeta}$ via standard first- or second-order optimization methods. 
	Let $\mathcal{L}^*(\lambda; \{\mDelta\t\}, \mA(0))$ be the optimized loglikelihood for fixed $\lambda$. 
	We then complete the maximum likelihood scheme by optimizing $\mathcal{L}^*$ with respect to $\lambda$, which our accompanying software does via a customized hill-climbing algorithm. 
	In general, $\mathcal{L}^*$ may fail to be convex as a function of $\lambda$, and we therefore perform multiple runs with different initial values of $\lambda$ in order to find the global maximum.

\articleonly{
    \newpage
	\section{Additional Model Traces}
}{
	\newpage
	\section{Additional Model Traces}
}

\articleonly{\begin{figure}[h!]}{
    \vspace{2ex}\begin{figure*}[h!]
    }
\articleonly{\begin{center}
\includegraphics[width=\linewidth]{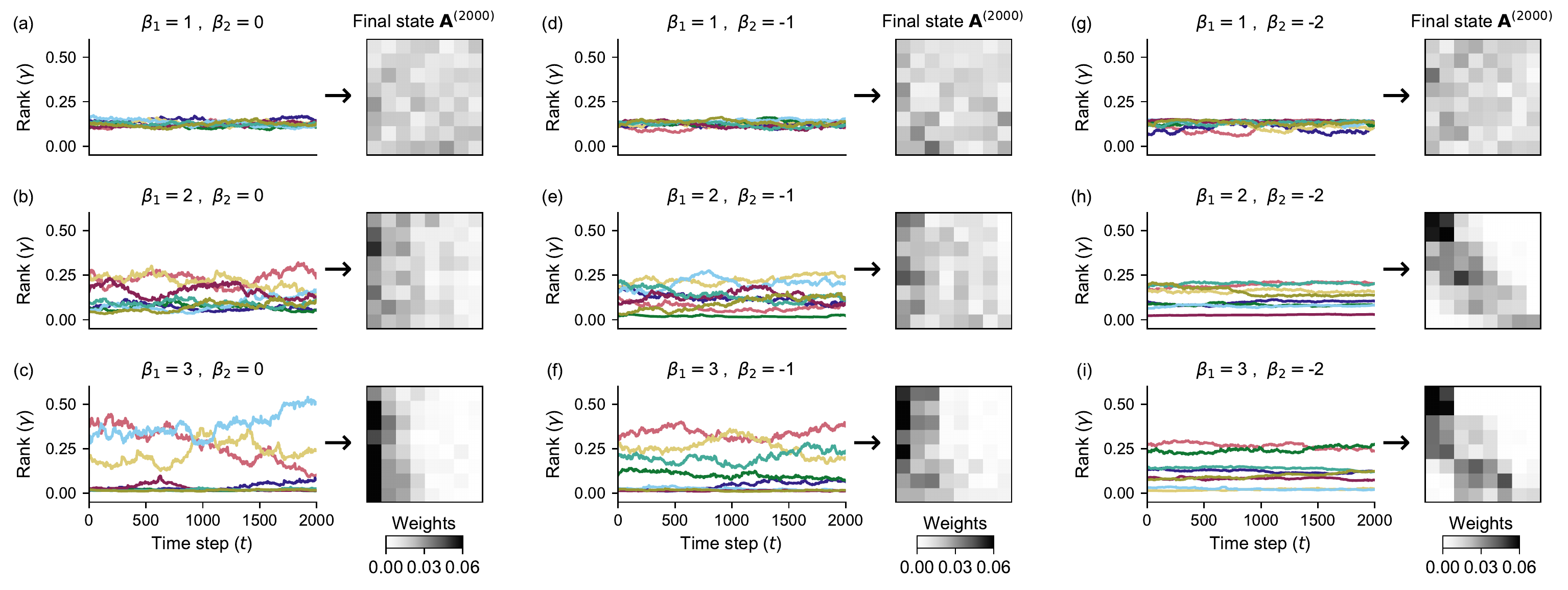}\end{center}\vspace{-15pt}}
{\includegraphics[width=0.85\paperwidth]{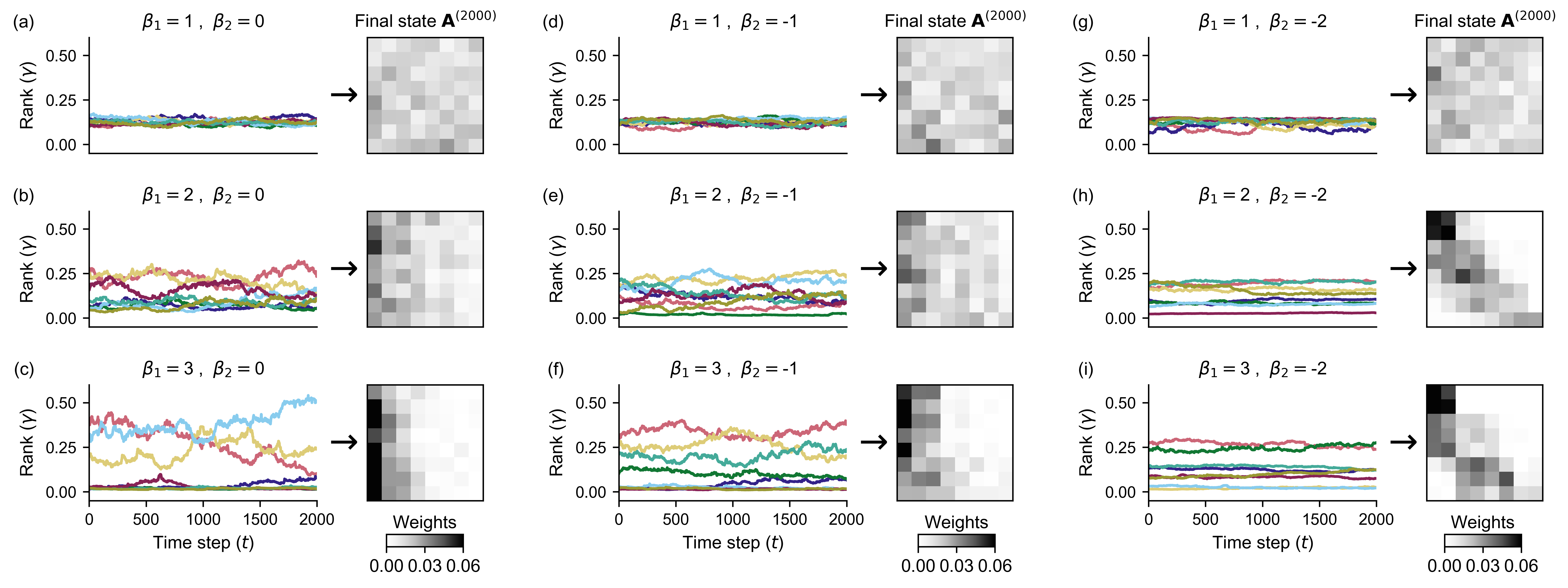}}
\caption{Example dynamics of the model.
 Populations of $n = 8$ agents were simulated for $2000$ time steps using the SpringRank score with linear and quadratic features, varying the preference parameters $\beta_1$ and $\beta_2$ as indicated in the panels. The memory parameter was fixed at $\lambda = 0.995$. In each panel, the plot on the left shows the simulated rank vector $\vgamma$ over time; different colors track the ranks of different agents. The heatmap on the right shows the adjacency matrix $\mA$ at time step $t = 2000$ for the corresponding parameter values.
}
\label{fig:trace-many-springrank}
\articleonly{\vspace{-15pt}\end{figure}}{\end{figure*}}

\articleonly{\begin{figure}[h!]}{
    \vspace{2ex}\begin{figure*}[h!]
    }
\articleonly{\begin{center}
\includegraphics[width=\linewidth]{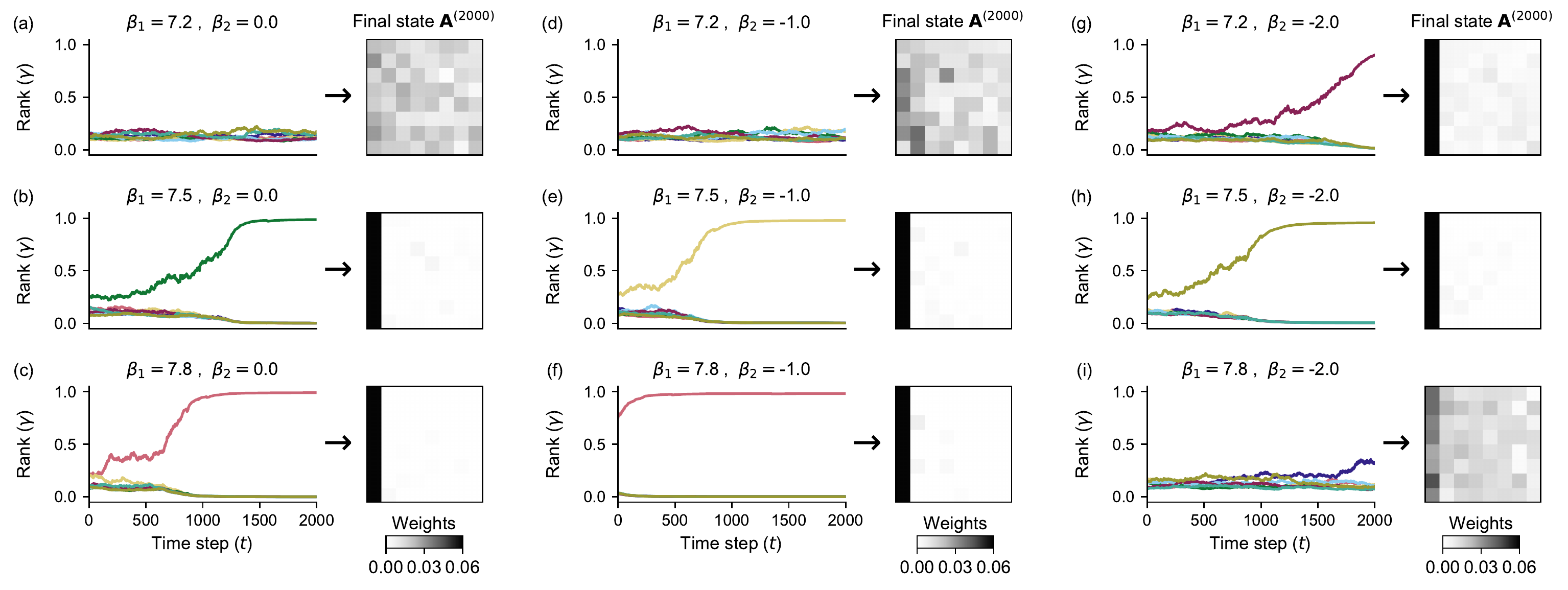}\end{center}\vspace{-15pt}}
{\includegraphics[width=0.85\paperwidth]{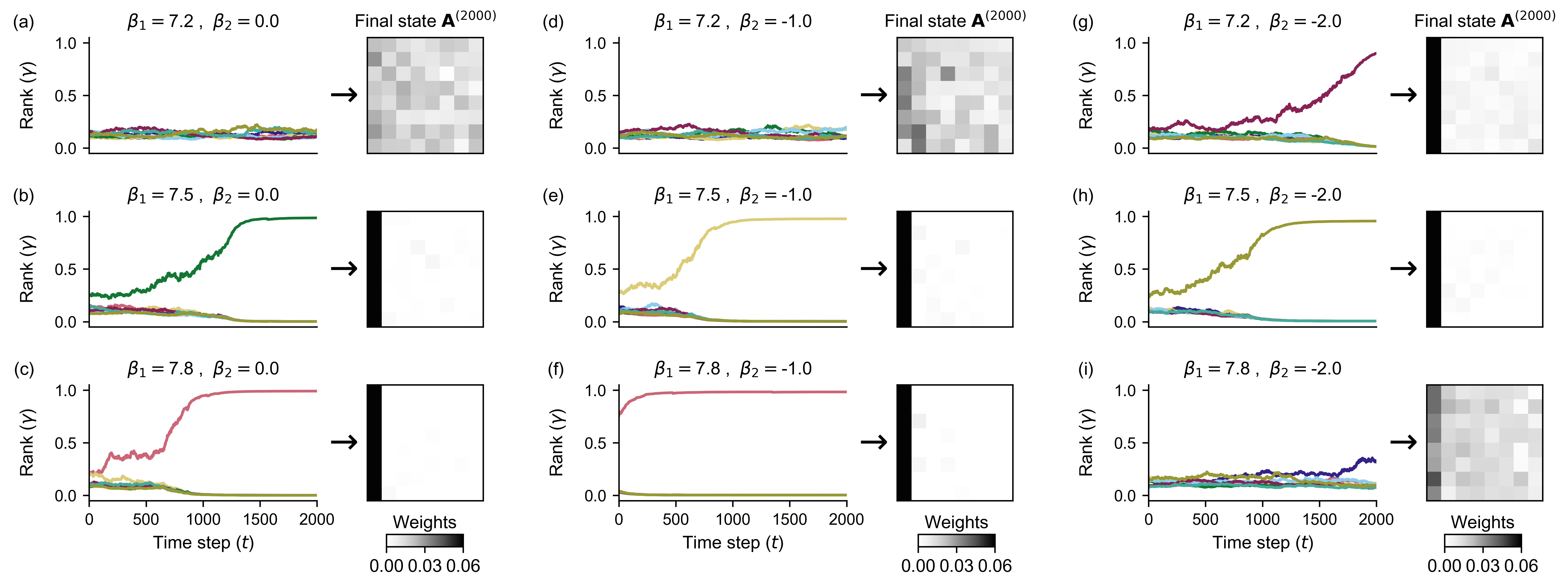}}
\caption{As in \cref{fig:trace-many-springrank}, using the PageRank score function.
}
\label{fig:trace-many-pagerank}
\articleonly{\vspace{-15pt}\end{figure}}{\end{figure*}}

\articleonly{\begin{figure}[h!]}{
		\vspace{2ex}\begin{figure*}[h!]
		}
\articleonly{\begin{center}
    \includegraphics[width=\linewidth]{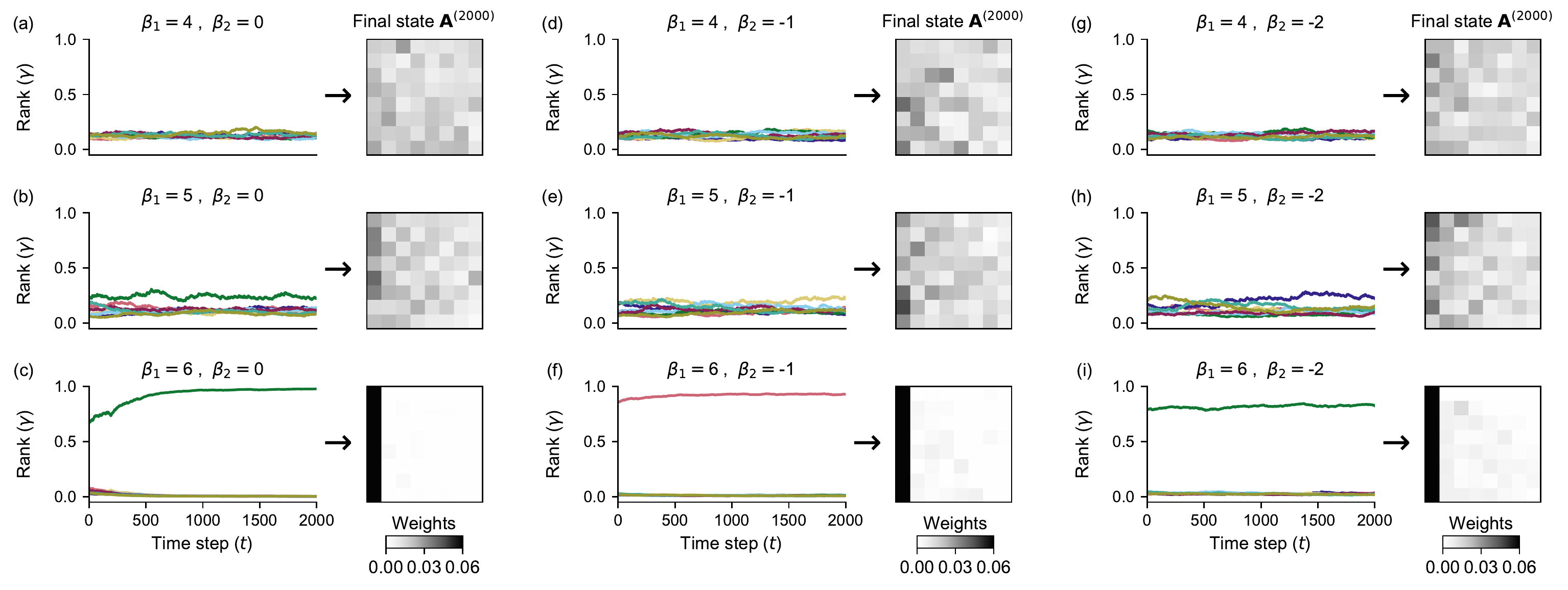}\end{center}\vspace{-15pt}}
    {\includegraphics[width=0.85\paperwidth]{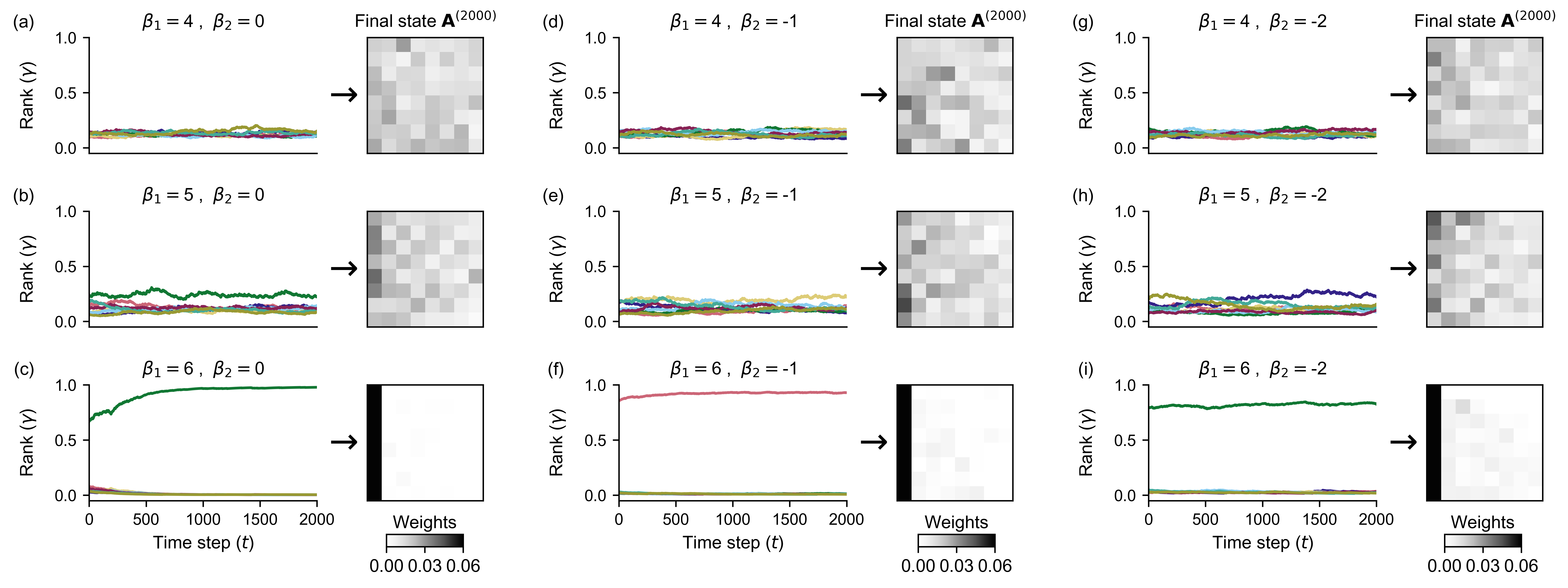}}
    \caption{As in \cref{fig:trace-many-springrank}, using the Root-Degree score function. 
    }
    \label{fig:trace-many-root-degree}
\articleonly{\vspace{-15pt}\end{figure}}{\end{figure*}}

\articleonly{\begin{figure}[h!]}{\vspace{2ex}\begin{tuftefigure}[h!]}
    \articleonly{\begin{center}
        \includegraphics{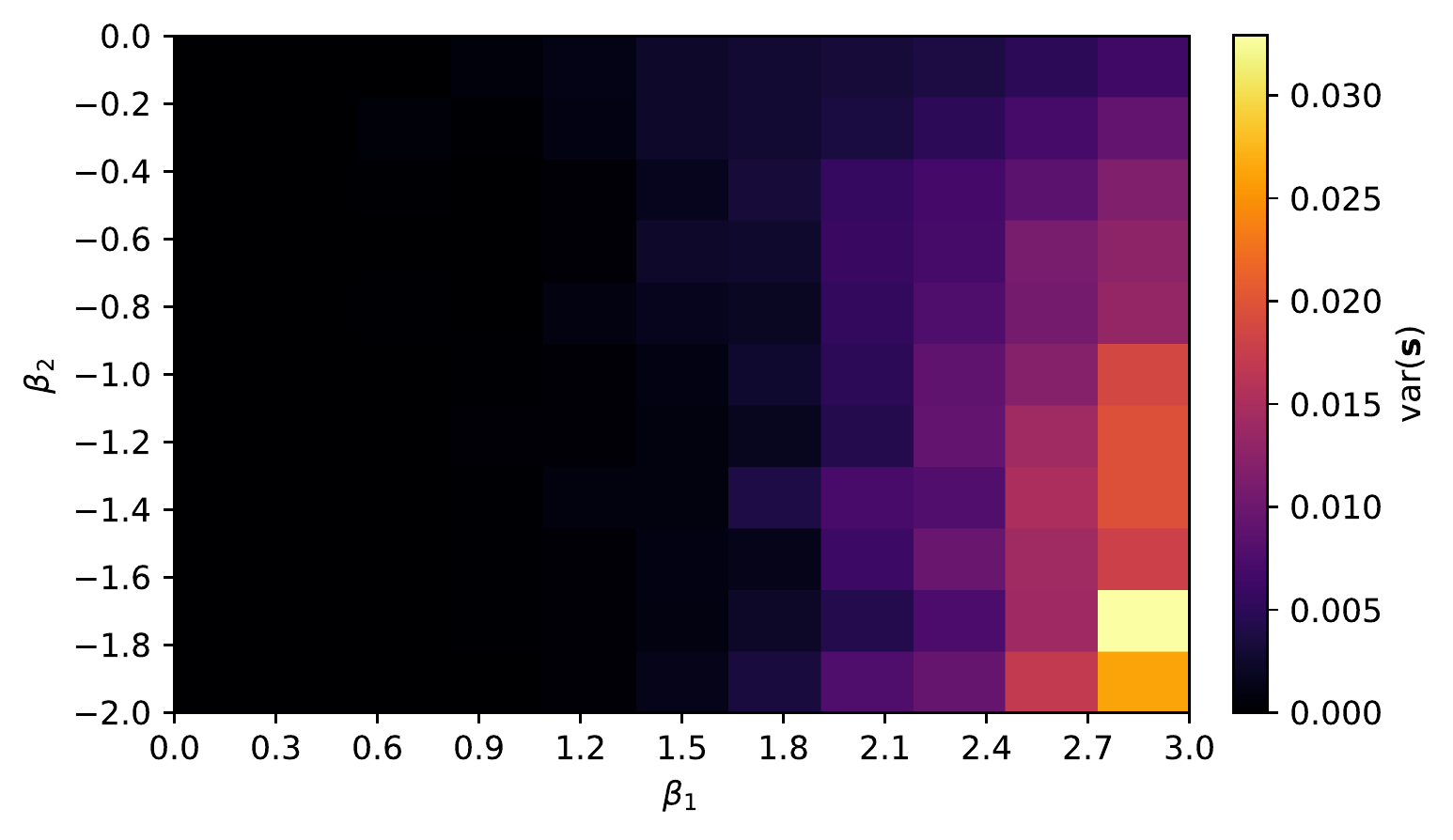}\end{center}}
        {\includegraphics[width=\linewidth]{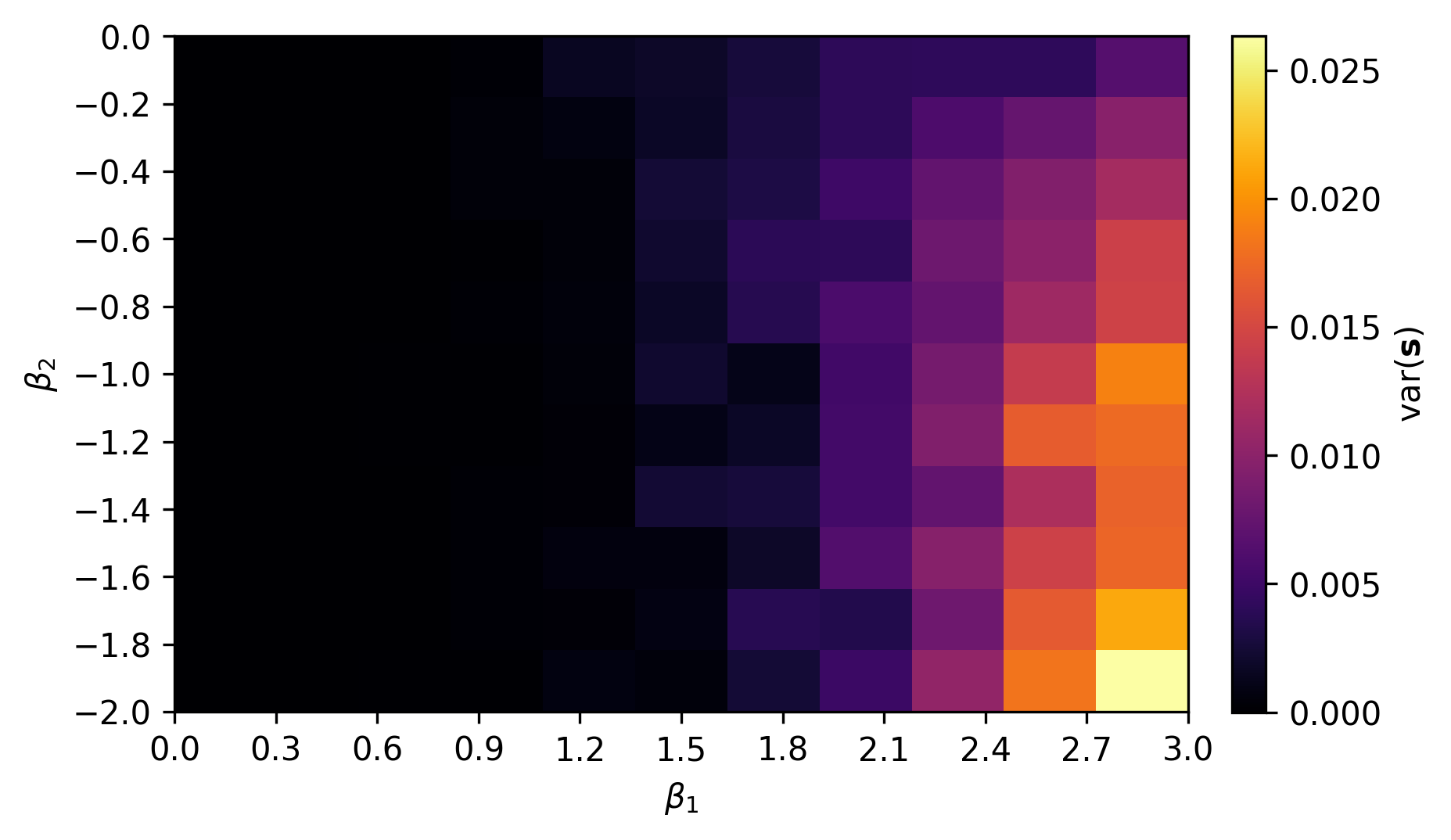}}
    \caption{Plot of the variance in the rank vector $\vs$ over the final 500 iterations of a series of simulations with $n = 8$ and $\lambda = 0.995$ (as in Fig.~2). 
    The parameters $\beta_1$ and $\beta_2$ are allowed to vary. 
    Higher variances correspond to more strongly hierarchical states.  
    }
    \label{fig:beta_heatmap}
    \articleonly{\end{figure}}{\end{tuftefigure}}

\articleonly{\begin{figure}[h!]}{\vspace{2ex}\begin{figure*}[h!]}
\articleonly{\begin{center}
	\includegraphics[width=0.95\linewidth]{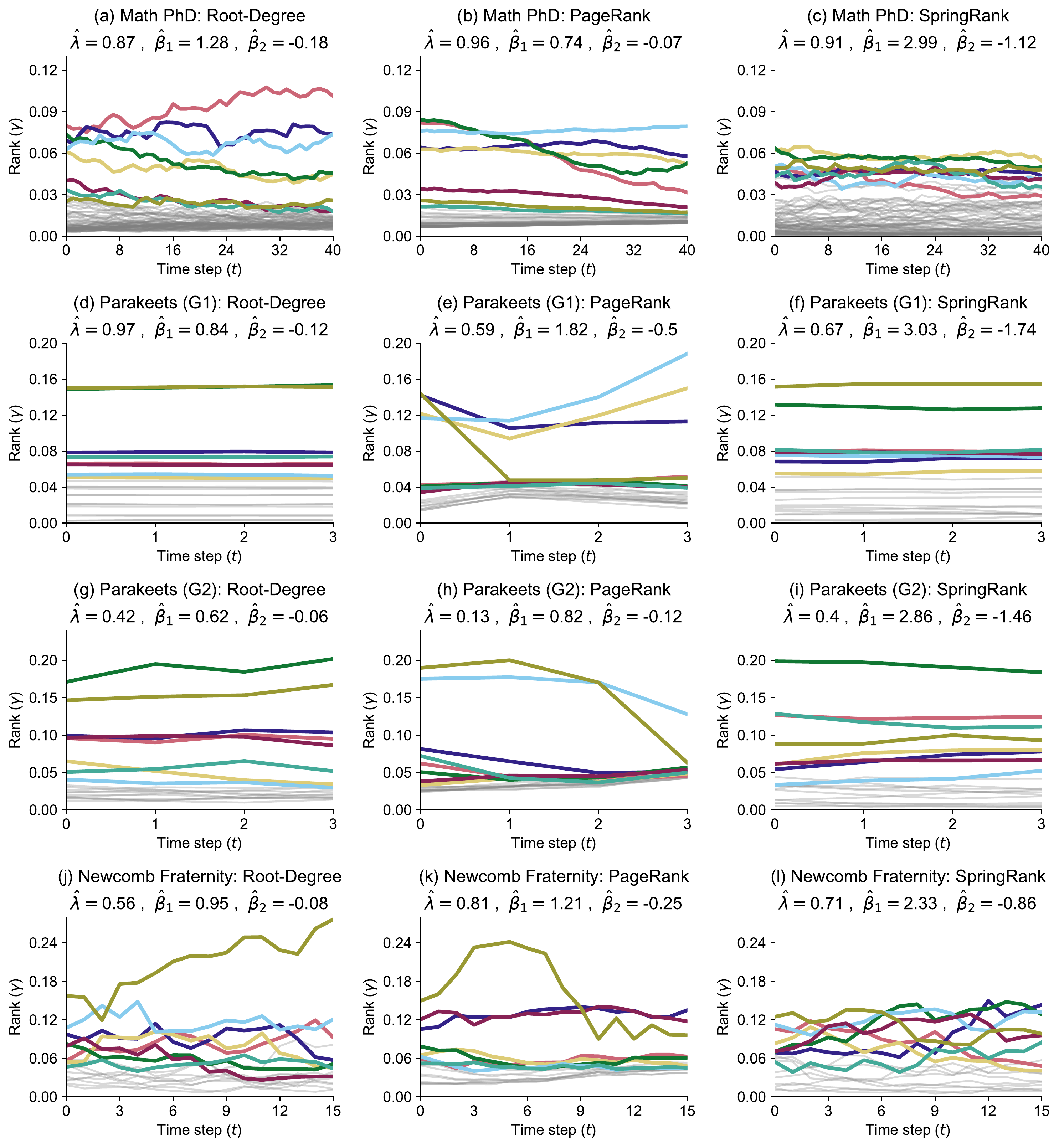}\end{center}}
	{\includegraphics[width=0.85\paperwidth]{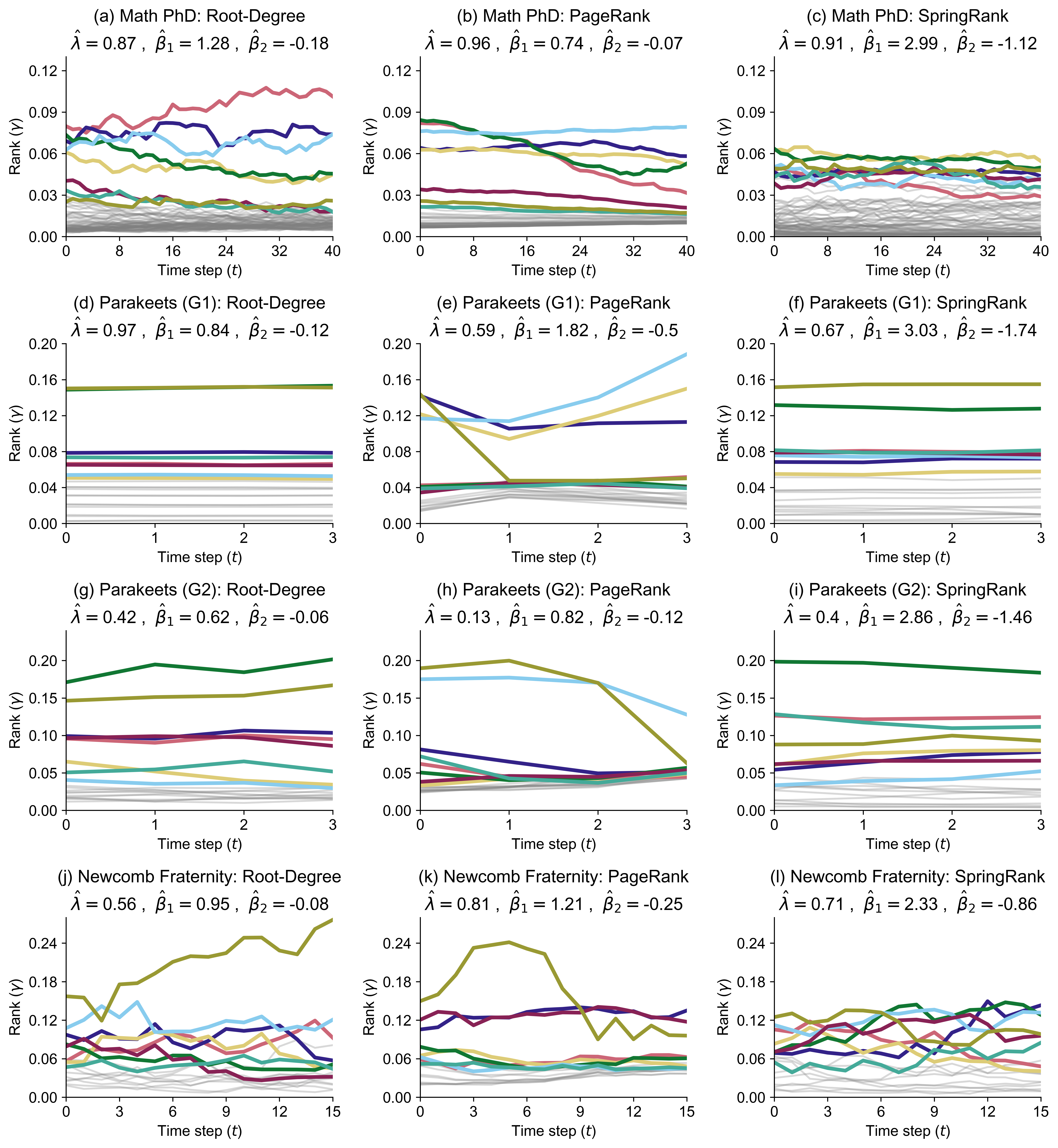}}
\caption{Simulated dynamics of the model using inferred parameters $\hat{\lambda}$, $\hat{\beta}_1$, $\hat{\beta}_2$ in Table 1. 
	The value of $m$ for each row of panels corresponds to the average number of updates per time step in the corresponding data set, indicated in the panel title ($m = 150$ for Math PhD, $m=279$ for Parakeets (G1), $m=320$ for Parakeets (G2), and $m=85$ for Newcomb Fraternity). 
	Furthermore, the simulations in each row were initialized using the network at the relevant initial time step in the corresponding data set: the network of endorsements aggregated up to year 1960 for the Math PhD data set, and the network at time step 0 in each of the Parakeet and Newcomb Fraternity data sets. The traces in color correspond to nodes that rank among the top 8 on average over time; those in light gray track all other nodes.
	Other parameters: $\alpha_p = 0.85$, $\alpha_s = 10^{-8}$.
}
\label{fig:trace-inferred-params}
\articleonly{\end{figure}}{\end{figure*}}

\end{document}